\documentclass{article}
\usepackage[T1]{fontenc}
\usepackage{geometry}
\usepackage{graphicx} 
\usepackage{amsmath,amssymb}
\usepackage{amsthm}
\usepackage{enumerate}
\usepackage{subcaption}
\usepackage{tikz}
\usepackage{placeins}
\usepackage{comment}
\usepackage[color=white]{todonotes}
\usepackage[colorlinks=true,citecolor=green!70!black]{hyperref}
\usepackage[nameinlink]{cleveref}
\usepackage{doi}
\usetikzlibrary{decorations.markings}
\usetikzlibrary{arrows}
\usetikzlibrary{arrows.meta}

\newcommand{\lca}{\textsf{LCA}}
\newcommand{\ch}{\textsf{ch}}

\newcommand{\internal}[1]{\overset{\circ}{#1}}

\setuptodonotes{inline}

\newtheorem{theorem}{Theorem}
\newtheorem{lemma}{Lemma}

\newtheoremstyle{thmpart}
{3pt}
{3pt}
{}
{}
{\normalfont\bfseries}
{:}
{.5em}
{\thmname{#1}\thmnumber{ #2}\thmnote{ (#3)}}

\theoremstyle{thmpart}
\newtheorem{definition}{Definition}
\newtheorem{myclaim}{Claim}

\crefname{claim}{Claim}{Claims}
\crefname{enumi}{Type}{Type}

\newcommand{\mytype}[1]{\hyperlink{fig:#1}{\Cref*{#1}}}

\tikzset{>={Stealth[width=4pt]}}

\author{Cyriac Antony, Alessio Martino, and Blerina Sinaimeri\\
Department of AI, Data and Decision Sciences, Luiss University, Rome, Italy\\
\texttt{\{acyriac,amartino,bsinaimeri\}@luiss.it}}

\date{}

\providecommand{\keywords}[1]
{
  \small	
  \textbf{\textit{Keywords:}} #1
}

\begin{document}
%
\title{Structural properties of\\ distance-bounded phylogenetic reconciliation}
%
%

%
\maketitle              
\begin{abstract}
Phylogenetic reconciliation seeks to explain host–symbiont co-evolution by mapping parasite trees onto host trees through events such as cospeciation, duplication, host switching, and loss. Finding an optimal reconciliation that ensures time feasibility is computationally hard when timing information is incomplete, and the complexity remains open when host switches are restricted by a fixed maximum distance $d$. While the case $d=2$ is known to be polynomial, larger values are unresolved. In this paper, we study the cases  $d=3$ and $d=4$. We show that although arbitrarily large cycles may occur, it suffices to check only bounded-size cycles (we provide a complete list), provided the reconciliation  satisfies acyclicity (i.e., time-feasibility) in a stronger sense. 
These results do not resolve the general complexity, but highlight structural properties that advance the understanding of distance-bounded reconciliations. 

\keywords{phylogenetic reconciliation, cophylogeny, host-switch distance, time-feasibility.}
\end{abstract}
\section{Introduction}
Understanding how parasites (or more generally symbionts) have co-evolved with their hosts is a fundamental problem which has applications that range from evolutionary biology to public health, for example in uncovering the origins of new infectious diseases \cite{Etherington2006,Lei2014,PENNINGTON2015}. The increasing availability of DNA sequence data has only amplified the importance of such analyses. One of the main approaches to study host–symbiont co-evolution is through \textit{cophylogeny},  which derive joint evolutionary scenarios for both hosts and parasites, typically represented as phylogenetic trees computed from sequence data.   In this framework, the interaction between host and symbiont lineages is captured by a \textit{reconciliation} a mapping of the parasite tree onto the host tree (see, e.g., \cite{Charleston2003,MM2005,Donati2015,stolzer12}). This mapping allows the identification of four macro evolutionary events: (a) \emph{cospeciation}, where parasite divergence coincides with host divergence; (b) \emph{duplication}, where the parasite diverges independently of the host; (c) \emph{host switching}, where a parasite colonizes a new host species unrelated to host divergence; and (d) \emph{loss}, which can capture cases where the host speciates but the parasite lineage follows only one of the new host species.

Reconciliations can be constructed in a parsimonious framework: one assigns a cost to each of the four types of events and then seeks to minimize the total cost (computed in an additive way).   If timing information (i.e., the order in which the speciation events occurred in the host phylogeny for incomparable vertices) is not known (or is considered as not sufficiently reliable to be used), as is often the case,  the problem is NP-hard \cite{OFCLH2011,Tofigh11}.  A way to deal with this is either to rely on heuristics \cite{CFOLH2010} (that is, on approaches that are not guaranteed to be optimal) or to settle for solutions that may be biologically unfeasible, that is for solutions where some of the switches induce a contradictory time ordering for the internal vertices of the host tree \cite{Bansal12,stolzer12,Donati2015,Jacox2016,capybara}.  It is worth noting that while finding an optimal time-feasible reconciliation is computationally hard, checking whether a given reconciliation is time-feasible can be done in polynomial time by constructing an auxiliary directed graph encoding the temporal constraints and verifying the absence of cycles (see e.g. \cite{stolzer12,capybara}).

Because timing information is often unavailable, one alternative is to enrich the reconciliation framework with other biological signals. An important example is the distance of host switches, defined as the distance in the host tree between the donor and recipient species of a switch. This measure reflects the idea that symbionts more readily colonize closely related hosts, while long-distance switches may require substantial adaptive changes \cite{DeVienne2007,Poulin2003}. Distance has been integrated into parsimony-based frameworks in different ways. For instance, \cite{Donati2015} restrict switches to occur only within a given bounded distance in the host tree, whereas in \cite{Bansal2013} the cost of a switch increases with its distance, so that longer switches are penalized but not forbidden. In both cases, however, there is no guarantee that the resulting optimal reconciliations are time-feasible.

It therefore remains an open problem to determine whether fixing the maximum allowed distance $d$ (when $d$ is not part of the input) could yield optimal, time-feasible reconciliations in polynomial time. 
For the special case $d=2$, the problem is solvable in polynomial time since all reconciliations are acyclic as pointed out in \cite{Tavernelli22}; but as the authors emphasize, this result does not extend directly to larger values of $d$. 
In this paper, we address the next open cases, namely $d=3$ and $d=4$. 
We show that although arbitrarily large cycles may arise in a reconciliation, it is nevertheless sufficient to focus on cycles of bounded size (we provide a complete list of such short cycles), provided the reconciliation  satisfies acyclicity in a stronger sense. 
This makes it possible to check for cycles locally, which can have algorithmic consequences. 
While these results do not directly resolve the algorithmic complexity of the problem, they shed new light on its structural properties and provide useful tools for further studies.

%
%
%
%
%

\section{Preliminaries}
In this section we provide some definitions and basic results that will be used throughout the paper.

For a directed graph $G$, we denote by $V(G)$ and $E(G)$ respectively the set of nodes and the set of arcs of $G$. The out-neighbors of a node $v$ are called its children. We consider ordered rooted trees in which arcs are directed away from the root. For a tree $T$, we denote by $L(T)$ the set of leaf nodes, i.e.  those nodes without children, and denote by $r_T$ the root of $T$; the non-leaf nodes are called the internal nodes of $T$. 
A rooted full binary tree is a rooted tree in which every internal node has two children.  A \textit{phylogenetic tree} is a rooted full binary tree. 

We denote by $p(w)$ the parent of a node $w$. 
If there exists a path from a node $v$ to a node $w$, the node $w$ is called a \emph{descendant} of $v$, and $v$ is called an \emph{ancestor} of $w$; if moreover $v\neq w$, we say that $w$ is a \emph{proper descendant} of $v$, and that $v$ is a \emph{proper ancestor} of $w$. If neither $w$ is an ancestor of $v$ nor $v$ is an ancestor of $w$, we say that the two nodes are \emph{incomparable}, and denote this as $v\not\sim w$.  

If a node \( u \) is a child of a child of \( v \), then \( u \) is said to a \emph{grandchild} of \( v \). 
If a node \( u \) and a node \( v \) have the same parent, then \( u \) and \( v \) are \emph{siblings}. 
If the parent of a node \( u \) and the parent of a node \( v \) are siblings, then \( u \) and \( v \) are said to be \emph{cousins}. 

We denote by $\lca(v,w)$ the lowest common ancestor of two nodes $v$ and $w$. The subtree of a tree $T$ rooted at a node $v$ containing all descendants of $v$ is denoted by $T|_v$. 
Throughout this paper, for a node \( v \) in the host tree \( H \), we denote the vertex set of \( H|_v \) by \( S_v \). 
We denote by $d_T(v,w)$ the distance, i.e., the number of arcs on a path between two nodes $v$ and $w$ in $T$.
We define the `levels' of nodes in a rooted tree in such a way that nodes closer to the root are at the higher level; formally, we define \emph{level} of a node \( v \) in a rooted tree \( T \) with root \( r \) as \( -d_T(v,r) \). 
We now introduce the notion of \textit{reconciliation}. 

Let $H$ and $P$ be respectively the rooted phylogenetic trees of the host and parasite species, both binary and full. Let $\sigma$ be a function from $L(P)$ to $L(H),$ representing the parasite/host associations between extant species. A reconciliation is a function $\varphi$ that assigns, for each parasite node $p\in V(P)$, a host node $\varphi(p)\in V(H)$, and satisfies the conditions stated in \Cref{defn:recon} below. A reconciliation must induce an event function $E_\varphi$ on $V(P)$ which associates each parasite node $p$ to an event $E_\varphi(p)$. The set of events is denoted by $\mathcal{E} := \{\mathbb{C} ,\,\mathbb{D},\,\mathbb{S},\,\mathbb{T}\}$; the leaf parasite node has a special event $\mathbb{T}$; for internal parasite nodes, the event $E_\varphi(p)$ is one among three options: \emph{cospeciation} $\mathbb{C}$, \emph{duplication} $\mathbb{D}$, and \emph{host-switch} $\mathbb{S}$. The event for an internal node $p$ will depend on the hosts that are assigned by $\varphi$ to $p$ and to the two children $p_1$ and $p_2$ of $p$. In \Cref{defn:recon}, this dependency is expressed by $E_\varphi(p) := E(\varphi(p),\varphi(p_1),\varphi(p_2))$.

\begin{definition}[Reconciliation, Events of a parasite node]\label{defn:recon}
Given two phylogenetic trees $H$ and $P$, and a function $\sigma: L(P)\to L(H)$, a \emph{reconciliation} of $(H, P, \sigma)$ is a function $\varphi: V(P)\to V(H)$ satisfying the following:
\begin{enumerate}
\item For every leaf node $p\in L(P),$  $\varphi(p)$ is equal to $\sigma(p)$, and $E_\varphi(p)=\mathbb{T}$.
\item For every internal node $p\in V(P)\setminus L(P)$ with children $(p_1,\,p_2),$ exactly one of the following applies:
\begin{enumerate}
\item $E\left(\varphi(p),\varphi(p_1),\varphi(p_2)\right)=\mathbb{S}$, that is, either $\varphi(p_1)\not\sim\varphi(p)$ and $\varphi(p_2)$ is a descendant of $\varphi(p)$, or $\varphi(p_2)\not\sim\varphi(p)$ and $\varphi(p_1)$ is a descendant of $\varphi(p)$,
\item $E\left(\varphi(p),\varphi(p_1),\varphi(p_2)\right)=\mathbb{C}$, that is, $\lca(\varphi(p_1),\varphi(p_2)) = \varphi(p)$, and $\varphi(p_1)\not\sim\varphi(p_2)$,
\item $E\left(\varphi(p),\varphi(p_1),\varphi(p_2)\right)=\mathbb{D},$ that is, $\varphi(p_1)$ and $\varphi(p_2)$ are descendants of $\varphi(p)$, and the previous two cases do not apply.
\end{enumerate}
\end{enumerate}
\end{definition}





An edge \( (u,v) \) of \( P \) is called a \emph{switching edge} (w.r.t.\ a reconciliation \( \varphi \)) if parasite \( v \) switches host (i.e., \( \varphi(v)\not\sim \varphi(u) \); see \Cref{fig:HPsigma eg} for examples). 

The function $E_\varphi$ partitions the set of internal parasite nodes into three disjoint subsets according to their event; these subsets are denoted by $V^\mathbb{C}(P), \,V^\mathbb{D}(P),\,V^\mathbb{S}(P)$. 
In a reconciliation, an internal parasite node may also be associated with one or more \emph{loss events}; see \cite{capybara,Donati2015,Tofigh11} for details. 

Given a cost vector that assigns a value to each event type, the cost of a reconciliation is the sum of the costs of its events. 
\begin{definition}
The \emph{reconciliation problem} is: given two phylogenetic trees $H$ and $P$, a function $\sigma: L(P)\to L(H),$ and a cost vector $\vec{c}$, find a reconciliation $\varphi$ of $(H, P, \sigma)$ of minimum cost.
\end{definition}
\begin{definition}
For a fixed integer \( d\geq 2 \), the problem \emph{distance-bounded reconciliation\( (d) \)} is: given two phylogenetic trees $H$ and $P$, a function $\sigma: L(P)\to L(H),$ and a cost vector $\vec{c}$, find a reconciliation $\varphi$ of $(H, P, \sigma)$ of minimum cost such that $d_H(\varphi(u),\varphi(v))\leq d$ for every switching edge $(u,v)$.
\end{definition}




Since we exclusively deal with directed graphs, path (resp.\ cycle) in this paper refers to directed path (resp.\ directed cycle). 
We write a path in a graph as a comma-separated list of vertices and/or subpaths in the order it appears in the path. 
For example, consider a path \( P \) with vertex set \( \{u,v,w,x,y,z\} \) and arc set \( \{(u,v),(v,w),(w,x),(x,y),(y,z)\} \), and the subpath \( Q \) of \( P \) with vertex set \( \{v,w,x\} \) and arc set \( \{(v,w),(w,x)\} \). 
We may write \( Q \) as \( v,w,x \), and we may write \( P \) as \( u,v,w,x,y,z \) or \( u,v;Q;x,y,z \). 
A path segment in a cycle could also written in a similar way. 

\section{Acyclicity of reconciliation}
The acyclicity (i.e., time-feasibility) of a reconciliation \( \varphi \) is defined in terms of an associated directed graph \( D^\varphi \) \cite{stolzer12,capybara,Calamoneri2019}.
\begin{definition}[\cite{stolzer12,Donati2015}]\label{def:acyclicity_condition_std}
For an instance \( (H,P,\sigma) \) of the reconciliation problem and a reconciliation~\( \varphi \), \textbf{\boldmath \( D^\varphi \)} denotes the graph obtained from \( H \) by adding arcs \( (p(\varphi(u),\varphi(u')) \), \( (p(\varphi(u),\varphi(v')) \), \( (p(\varphi(v),\varphi(u')) \) and \( (p(\varphi(v),\varphi(v')) \) for every pair of (not necessarily distinct) switching edges \( (u,v) \) and \( (u',v') \) with \( u \) being an ancestor of \( u' \) in \( H \). 
\end{definition}
\noindent
\emph{Remark:} It is understood that an arc such as \( (p(\varphi(u)),\varphi(u')) \) need not be added if \( p(\varphi(u)) \) is an ancestor of \( \varphi(u') \) (it is redundant as there is a \( p(\varphi(u)),\varphi(u') \)-path already in \( H \)). 
~\\

An example is exhibited in \Cref{fig:Dphi Gphi eg}. 
Notice that intuitively the constraints capture the following: 
(i)~from the ancestry induced by the parasite tree, we know that $u'$ happened strictly after $u$; 
(ii)~the switching edge $(u,v)$ indicates that $\varphi(u)$ and $\varphi(v)$ co-existed in some time interval; and 
(iii)~everything that has happened before $\varphi(u)$ has happened before $\varphi(v)$.




\begin{definition}[Acyclicity \cite{stolzer12,Donati2015}]
For an instance \( (H,P,\sigma) \) of the reconciliation problem, a reconciliation \( \varphi \) is said to be \emph{acyclic} \( ( \)i.e., time-feasible\( ) \) if there is no cycle in \( D^\varphi \). 
\end{definition}

We introduce a notion stronger than acyclicity, and call it `strong acyclicity'. 
Strong acyclicity employs the simplifying assumption that co-existence is transitive (it is not since co-existence in this context means co-existence in some time interval). 
If an algorithm employs the strong acyclicity condition in place of the acyclicity condition, some acyclic reconcilation could be missed out (which implies that the solution output by such an algorithm could be sub-optimal). 
Yet, this concession is worthy of consideration if limiting to strongly acyclic reconciliations alters the complexity of the problem. 
We show that unlike acyclicity, with the strong acyclicity condition, cycles can be checked locally (see \Cref{thm:no long without short dist3n4}). 
This can have algorithmic consequences. 
Strong acyclicity is also defined in terms of an associated directed graph, which we denote by \( G^\varphi \). 
In \( G^\varphi \), we allow two types of arcs, namely strict arcs and relaxed arcs. 
A \emph{strict} arc \( (u,v) \) denotes that \( u \) pre-dates \( v \) (that is, \( u \) exists before \( v \) and no longer exists when \( v \) comes to existence),  whereas a \emph{relaxed} arc \( (u,v) \) denotes that \( u \) and \( v \) co-existed.
The arcs originally present in the host tree are the strict arcs, and the arcs added to \( H \) are the relaxed arcs. 




\begin{definition}\label{def:acyclicity_condition_ours}
For an instance \( (H,P,\sigma) \) of the reconciliation problem and a reconciliation~\( \varphi \), \textbf{\boldmath \( G^\varphi \)} denotes the graph obtained form \( H \) by adding relaxed arcs \( (\varphi(u),\varphi(v)) \) and \( (\varphi(v),\varphi(u)) \) for every switching edge \( (u,v) \) \( ( \)we take the original arcs in \( H \) to be strict arcs\( ) \). 
\end{definition}

\begin{definition}[Strong acyclicity]
For an instance \( (H,P,\sigma) \) of the reconciliation problem, a reconciliation \( \varphi \) is said to be \emph{strongly acyclic} if there is no cycle in \( G^\varphi \) with one or more strict arcs. 
\end{definition}
\noindent
\emph{Remark:} Note that arcs in \( G^\varphi \) not in \( A_{par} \) appear as digons (i.e., \( (u,v) \) and \( (v,u) \)). 
Hence, it can also be modeled as a mixed graph, and \( \varphi \) is strongly acyclic if and only if there is no directed mixed cycle in the constructed mixed graph. 
~\\

An example is exhibited in \Cref{fig:Dphi Gphi eg}. 

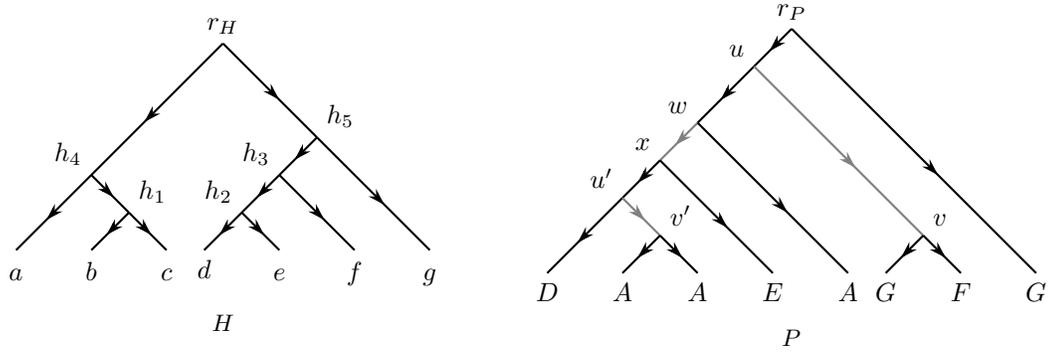
\begin{figure}[hbtp]
\tikzset{
dot/.style={draw,circle,minimum size=1.5,fill=black,inner sep=0},
tree-edges/.style={dashed, gray},
}
\centering
\begin{subfigure}[c]{0.5\textwidth}
\centering
\begin{tikzpicture}[scale=0.5]
\node[coordinate][label=\( r_H \)] (o){};
\begin{scope}[thick,decoration={
    markings,
    mark=at position 0.6 with {\arrow{>}}}
    ] 
\draw[postaction={decorate}] (o)--+(-3.5,-3.5) node[coordinate][label=above left:\( h_4 \)](j){};
\draw[postaction={decorate}] (o)--+( 2.5,-2.5) node[coordinate][label=above right:\( h_5 \)](k){}; 
\draw[postaction={decorate}] (j)--+(-2,-2) node[coordinate][label=below:\( a\mathstrut \)](j1){};
\draw[postaction={decorate}] (j)--+( 1,-1) node[coordinate][label=above right:\( h_1 \)](j2){};
\draw[postaction={decorate}] (j2)--+(-1,-1) node[coordinate][label=below:\( b\mathstrut \)](j3){};
\draw[postaction={decorate}] (j2)--+( 1,-1) node[coordinate][label=below:\( c\mathstrut \)](j4){};
\draw[postaction={decorate}] (k)--+(-1,-1) node[coordinate][label=above left:\( h_3 \)](k1){};
\draw[postaction={decorate}] (k)--+( 3,-3) node[coordinate][label=below:\( g\mathstrut \)](k2){};
\draw[postaction={decorate}] (k1)--+(-1,-1) node[coordinate][label=above left:\( h_2 \)](k3){};
\draw[postaction={decorate}] (k1)--+( 2,-2) node[coordinate][label=below:\( f\mathstrut \)](k4){};
\draw[postaction={decorate}] (k3)--+(-1,-1) node[coordinate][label=below:\( d \)](k5){};
\draw[postaction={decorate}] (k3)--+( 1,-1) node[coordinate][label=below:\( e\mathstrut \)](k6){};
\end{scope}
\end{tikzpicture}
\caption*{\( H \)}
\end{subfigure}%
\begin{subfigure}[c]{0.5\textwidth}
\centering
\begin{tikzpicture}[scale=0.5]
\node (o)[coordinate][label=\( r_P \)]{};
\begin{scope}[thick,decoration={
    markings,
    mark=at position 0.6 with {\arrow{>}}}
    ] 
\draw[postaction={decorate}] (o)--+(-1,-1) node[coordinate][label=above left:\( u \)](j){};
\draw[postaction={decorate}] (o)--+( 6.5,-6.5) node[coordinate][label=below:\( G \)](k){}; 
\draw[postaction={decorate}] (j)--+(-1.5,-1.5) node[coordinate][label=above left:\( w \)](j1){};
\draw[postaction={decorate},draw=gray] (j)--+( 4.5,-4.5) node[coordinate][label=above right:\( v \)](j2){};
\draw[postaction={decorate},draw=gray] (j1)--+(-1,-1) node[coordinate][label=above left:\( x \)](j3){};
\draw[postaction={decorate}] (j1)--+( 4,-4) node[coordinate][label=below:\( A \)](j4){};
\draw[postaction={decorate}] (j2)--+(-1,-1) node[coordinate][label=below:\( G \)](j5){};
\draw[postaction={decorate}] (j2)--+( 1,-1) node[coordinate][label=below:\( F \)](j6){};
\draw[postaction={decorate}] (j3)--+(-1,-1) node[coordinate][label=above left:\( u' \)](j7){};
\draw[postaction={decorate}] (j3)--+( 3,-3) node[coordinate][label=below:\( E \)](j8){};
\draw[postaction={decorate}] (j7)--+(-2,-2) node[coordinate][label=below:\( D \)](j9){};
\draw[postaction={decorate},draw=gray] (j7)--+( 1,-1) node[coordinate][label=above right:\( v' \)](j10){};
\draw[postaction={decorate}] (j10)--+(-1,-1) node[coordinate][label=below:\( A \)](j11){};
\draw[postaction={decorate}] (j10)--+( 1,-1) node[coordinate][label=below:\( A \)](j12){};
\end{scope}
\end{tikzpicture}
\caption*{\( P \)}
\end{subfigure}%
\caption{An example instance \( (H,P,\sigma) \) of the reconciliation problem (or of the problem distance-bounded reconciliation\( (4) \)). Trees \( H \) and \( P \) are shown here; \( \sigma \) maps upper case letters to lower case letters. 
A reconciliation \( \varphi \) of \( (H,P,\sigma) \) is given by \( \varphi(r_P)=r_H \), \( \varphi(u)=\varphi(v')=\varphi(w)=h_4 \), \( \varphi(v)=\varphi(x)=h_5 \), and \( \varphi(u')=h_2 \). 
switching edges (w.r.t.~\( \varphi \)) are drawn in gray.}
\label{fig:HPsigma eg}
\end{figure}

\begin{figure}[hbtp]
\tikzset{
dot/.style={draw,circle,minimum size=1.5,fill=black,inner sep=0},
tree-edges/.style={dashed, gray},
}
\centering
\begin{subfigure}[c]{0.5\textwidth}
\centering
\begin{tikzpicture}[scale=0.5]
\node[coordinate][label=\( r_H \)] (o){};
\begin{scope}[thick,decoration={
    markings,
    mark=at position 0.6 with {\arrow{>}}}
    ] 
\draw[postaction={decorate}] (o)--+(-3.5,-3.5) node[coordinate][label=above left:\( h_4 \)](j){};
\draw[postaction={decorate}] (o)--+( 2.5,-2.5) node[coordinate][label=above right:\( h_5 \)](k){}; 
\draw[postaction={decorate}] (j)--+(-2,-2) node[coordinate][label=below:\( a\mathstrut \)](j1){};
\draw[postaction={decorate}] (j)--+( 1,-1) node[coordinate][label=right:\( h_1 \)](j2){};
\draw[postaction={decorate}] (j2)--+(-1,-1) node[coordinate][label=below:\( b\mathstrut \)](j3){};
\draw[postaction={decorate}] (j2)--+( 1,-1) node[coordinate][label=below:\( c\mathstrut \)](j4){};
\draw[postaction={decorate}] (k)--+(-1,-1) node[coordinate][label={[label distance=-3pt]above left:\( h_3 \)}](k1){};
\draw[postaction={decorate}] (k)--+( 3,-3) node[coordinate][label=below:\( g\mathstrut \)](k2){};
\draw[postaction={decorate}] (k1)--+(-1,-1) node[coordinate][label={[label distance=-3pt]above left:\( h_2 \)}](k3){};
\draw[postaction={decorate}] (k1)--+( 2,-2) node[coordinate][label=below:\( f\mathstrut \)](k4){};
\draw[postaction={decorate}] (k3)--+(-1,-1) node[coordinate][label=below:\( d \)](k5){};
\draw[postaction={decorate}] (k3)--+( 1,-1) node[coordinate][label=below:\( e\mathstrut \)](k6){};

\end{scope}
\draw[->,gray,thick] (k1)--(j);
\end{tikzpicture}
\caption*{\( D^\varphi \) (without redundant arcs)}
\end{subfigure}%
\begin{subfigure}[c]{0.5\textwidth}
\centering
\begin{tikzpicture}[scale=0.5]
\node[coordinate][label=\( r_H \)] (o){};
\begin{scope}[thick,decoration={
    markings,
    mark=at position 0.6 with {\arrow{>}}}
    ] 
\draw[postaction={decorate}] (o)--+(-5.5,-5.5) node[coordinate][label=above left:\( h_4 \)](j){};
\draw[postaction={decorate}] (o)--+( 4.5,-4.5) node[coordinate][label=above right:\( h_5 \)](k){}; 
\draw[postaction={decorate}] (j)--+(-2,-2) node[coordinate][label=below:\( a\mathstrut \)](j1){};
\end{scope}
\begin{scope}[thick,decoration={
    markings,
    mark=at position 0.7 with {\arrow{>}}}
    ] 
\draw[postaction={decorate}] (j)--+( 1,-1) node[coordinate][label=right:\( h_1 \)](j2){};
\end{scope}
\begin{scope}[thick,decoration={
    markings,
    mark=at position 0.6 with {\arrow{>}}}
    ] 
\draw[postaction={decorate}] (j2)--+(-1,-1) node[coordinate][label=below:\( b\mathstrut \)](j3){};
\draw[postaction={decorate}] (j2)--+( 1,-1) node[coordinate][label=below:\( c\mathstrut \)](j4){};
\draw[postaction={decorate}] (k)--+(-1,-1) node[coordinate][label={[label distance=-3pt]above left:\( h_3 \)}](k1){};
\draw[postaction={decorate}] (k)--+( 3,-3) node[coordinate][label=below:\( g\mathstrut \)](k2){};
\draw[postaction={decorate}] (k1)--+(-1,-1) node[coordinate][label=above left:\( h_2 \)](k3){};
\draw[postaction={decorate}] (k1)--+( 2,-2) node[coordinate][label=below:\( f\mathstrut \)](k4){};
\draw[postaction={decorate}] (k3)--+(-1,-1) node[coordinate][label=below:\( d \)](k5){};
\draw[postaction={decorate}] (k3)--+( 1,-1) node[coordinate][label=below:\( e\mathstrut \)](k6){};
\end{scope}

\draw[gray,->,thick] (j) to[bend left=0] (k);
\draw[gray,->,thick] (k) to[bend right=20] (j);

\draw[gray,->,thick] (j)  to[bend right=0] (k3);
\draw[gray,->,thick] (k3) to[bend left=15] (j);
\end{tikzpicture}
\caption*{\( G^\varphi \)}
\end{subfigure}%
\caption{Directed graphs \( D^\varphi \) and \( G^\varphi \) for the reconciliation \( \varphi \) of \Cref{fig:HPsigma eg}. 
Added arcs are drawn in gray in this diagram for better visibility.}
\label{fig:Dphi Gphi eg}
\end{figure}
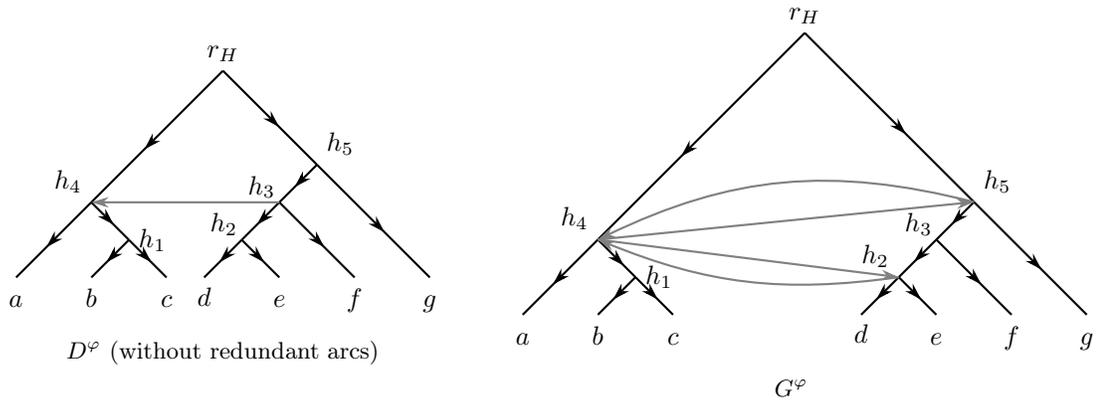

\begin{definition}
For a fixed value of \( d \), an instance \( (H,P,\sigma) \) of distance-bounded reconciliation\( (d) \) and a reconciliation \( \varphi \), \textbf{\boldmath \( H^+_d \)} denotes the graph obtained from \( H \) by adding arcs \( (u,v) \) and \( (v,u) \) for every pair of vertices in \( H \) such that \( d_H(u,v)\leq d \) and \( u\not\sim v \) \( ( \)i.e., adding digons for every potential host switch\() \). 
\end{definition}


The strict arcs in our model are precisely parent to child arcs in the host phylogenetic tree 
(\begin{tikzpicture}
\begin{scope}[thick,decoration={
    markings,
    mark=at position 0.75 with {\arrow{>}}}
    ] 
\draw[postaction={decorate}] (0,0)--(0.5,0);
\end{scope}
\end{tikzpicture}
 in diagrams), and we denote the set of such arcs by \( A_{par} \).

\begin{definition}[Arcs in \( H^+_d \) for \( d\in \{2,3,4\} \)]
~\\
\( A_{par}=\{(h_j,h_k)\colon h_j \text{ is the parent of } h_k (\text{in } H)\} \).\\[5pt]
\( ( \)for \( d\geq 2 \)\( ) \)\\
\( A_{sib}=\{(h_j,h_k),(h_k,h_j)\colon h_j \) and \( h_k \) are siblings\( \} \).\\[5pt]
\( ( \)for \( d\geq 3 \)\( ) \)\\
\( A_{nep}=\{(h_j,h_k),(h_k,h_j)\colon h_j \) is a child of the sibling of \( h_k\} \).\\[5pt]
\( ( \)for \( d\geq 4 \)\( ) \)\\
\( A_{gnep}=\{(h_j,h_k),(h_k,h_j)\colon h_j \) is a grandchild of the sibling of \( h_k\} \).\\
\( A_{cos}=\{(h_j,h_k),(h_k,h_j)\colon \) the parent of \( h_j \) and the parent of \( h_k \) are siblings\( \}\). 
\end{definition}
\noindent
Note: `\( nep \)' stands for nepos/neptis, meaning nephew/niece; and `\( cos \)' stands for cousin. 

\begin{lemma}\label{lem:acyclicity models comparison}
For an instance \( (H,P,\sigma) \) of the reconciliation problem, if a reconciliation~\( \varphi \) is strongly acyclic, then \( \varphi \) is acyclic. 
\end{lemma}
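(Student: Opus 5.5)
The plan is to prove the contrapositive: from a cycle in \( D^\varphi \) we build a closed walk in \( G^\varphi \) that uses at least one strict arc, and then extract from it a genuine cycle of \( G^\varphi \) through that strict arc. Both graphs contain \( H \), and the arcs of \( H \) are exactly the strict arcs of \( G^\varphi \). So every arc of \( D^\varphi \) that comes from \( H \) can be kept as it is. The whole task is therefore to simulate each added arc of \( D^\varphi \) by a walk in \( G^\varphi \) that contains a strict arc. (I read ``\( u \) an ancestor of \( u' \)'' in \Cref{def:acyclicity_condition_std} as ancestry in \( P \), which is clearly what is intended.)

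First I record a basic fact from \Cref{defn:recon}. For every parasite edge \( (a,b) \), either \( \varphi(b) \) is a descendant of \( \varphi(a) \), or \( (a,b) \) is a switching edge, i.e.\ \( \varphi(b)\not\sim\varphi(a) \). This holds in each of the cases \( \mathbb{S},\mathbb{C},\mathbb{D} \).
\begin{itemize}
\item In the first case, \( G^\varphi \) contains a (possibly empty) directed path of strict arcs from \( \varphi(a) \) to \( \varphi(b) \).
\item In the second case, \( G^\varphi \) contains the relaxed arc \( (\varphi(a),\varphi(b)) \).
\end{itemize}
Concatenating along the parasite path, I get that whenever \( u \) is an ancestor of \( u' \) in \( P \), there is a walk from \( \varphi(u) \) to \( \varphi(u') \) in \( G^\varphi \). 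Moreover, for a switching edge \( (u,v) \), the relaxed digon lets us move between \( \varphi(u) \) and \( \varphi(v) \) in both directions.

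Now take an added arc \( (p(x),y) \) of \( D^\varphi \), where \( x\in\{\varphi(u),\varphi(v)\} \), \( y\in\{\varphi(u'),\varphi(v')\} \), and \( (u,v),(u',v') \) are switching edges with \( u \) an ancestor of \( u' \). First, \( p(x) \) exists: \( x \) cannot be the root, since \( \varphi(u) \) and \( \varphi(v) \) are incomparable and the root is comparable to everything. I replace the arc by the following walk in \( G^\varphi \):
\begin{enumerate}
\item the strict arc \( (p(x),x) \);
\item if \( x=\varphi(v) \), the relaxed arc to \( \varphi(u) \);
\item the walk from \( \varphi(u) \) to \( \varphi(u') \) given by the fact above;
\item if \( y=\varphi(v') \), the relaxed arc from \( \varphi(u') \) to \( \varphi(v') \).
\end{enumerate}
This walk goes from \( p(x) \) to \( y \) and begins with a strict arc. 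Doing this for every arc of a cycle \( C \) of \( D^\varphi \) turns \( C \) into a closed walk \( W \) in \( G^\varphi \) that contains at least one strict arc.

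The last step, and the only point needing a little care, is to pass from the closed walk \( W \) to a cycle. The arc-occurrence multiset of any closed directed walk decomposes into directed cycles; for instance, repeatedly split the walk at a repeated vertex. Each arc occurrence lies in one of these cycles, so some cycle contains a strict arc. This contradicts strong acyclicity.

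Two degenerate situations need checking. One is that the walk in step 3 may be empty (e.g.\ \( u=u' \)); this is harmless, because the strict arc in step 1 is always present. The other is that digons of relaxed arcs might produce only relaxed 2-cycles in the decomposition; again this is harmless, since we only need the one cycle that contains the strict arc. I expect no real obstacle beyond verifying the basic fact about parasite edges case by case from \Cref{defn:recon}.
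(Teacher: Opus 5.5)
Your proposal is correct and follows essentially the same route as the paper. You prove the contrapositive by replacing each added arc \( (p(x),y) \) of a cycle in \( D^\varphi \) with a walk in \( G^\varphi \) that begins with the strict arc \( (p(x),x) \) and continues through relaxed digons and the image of the parasite path, which gives a closed walk containing a strict arc and hence a cycle containing one. Your write-up also fills two details the paper leaves implicit: you show that \( p(\varphi(u)) \) and \( p(\varphi(v)) \) always exist, since neither endpoint of a switching edge can be the root (the paper states this only as a proviso), and you spell out how a cycle is extracted from the closed walk.
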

\begin{proof}
To prove the lemma, it suffices to show that if there exists a cycle in \( D^\varphi \), then there is a cycle in \( G^\varphi \) with at least one strict arc. 
Observe that for every \( x,y \)-path in \( P \), there will be a \( \varphi(x),\varphi(y) \)-path in \( G^\varphi \) (for each arc \( (a,b) \) along this path, if it is a switching edge, use the arc \( (\varphi(x),\varphi(y)) \); otherwise, use the \( \varphi(x),\varphi(y) \)-path in \( H \)). 
Let \( (u,v) \) and \( (u',v') \) be switching edges (not necessarily distinct), and let \( u \) be an ancestor of \( u' \). 
Then, there is a \( \varphi(u),\varphi(u') \)-path in \( G^\varphi \), and consequently there is also a \( \varphi(v),\varphi(u') \)-path, \( \varphi(u),\varphi(v') \)-path, and \( \varphi(v),\varphi(v') \)-path in \( G^\varphi \) (recall that \( (\varphi(u),\varphi(v)) \) and \( (\varphi(u'),\varphi(v')) \) are arcs in \( G^\varphi \)). 
As a result, adding the (strict) arc \( (P(\varphi(u)),\varphi(u)) \) or arc \( (P(\varphi(v)),\varphi(v)) \)) to these paths give a \( (P(\varphi(u)),\varphi(u')) \)-path, \( (P(\varphi(u),\varphi(v')) \)-path,\( (P(\varphi(v),\varphi(u')) \)-path, 
\( (P(\varphi(v),\varphi(v')) \)-path in \( G^\varphi \), each with at least one strict arc (provided \( P(\varphi(u)) \) and \( P(\varphi(v)) \) exist). 
Consider a cycle \( C \) in \( D^\varphi \). 
Replacing arcs \( (x,y) \) in \( C \) not from \( H \) by those corresponding \( x,y \)-paths in \( G^\varphi \) produces a directed closed walk in \( G^\varphi \) containing at least one strict arc. 
Therefore, there is a cycle in \( G^\varphi \) containing at least one strict arc. 
\end{proof}

The converse is not true. 
It is easy to observe that reconciliations where every host switching is from a node \( h \) of \( H \) to a descendant of the sibling of \( h \) (called constrained transfer in \cite{Tavernelli22}) are acyclic, but not necessarily strongly acyclic. 
There are also other types of reconciliations which are acyclic but not strongly acyclic. 
For instance, the reconciliation \( \varphi \) of \Cref{fig:HPsigma eg} is an example because \( D^\varphi \) is a Directed Acyclic Graph (abbr.\ DAG), whereas \( G^\varphi \) contains a cycle with strict arcs (see \Cref{fig:Dphi Gphi eg}). 

For distance-bounded reconciliation with distance bound 2, every reconciliation is acyclic~\cite{Tavernelli22}. 
As a result, the problem distance-bounded reconciliation\( (2) \) is polynomial-time solvable. 
In fact, for distance-bounded reconciliation\( (2) \), every reconciliation \( \varphi \) is strongly acyclic, since the only cycles in \( G^\varphi \) are 2-cycles between siblings (they contain only relaxed arcs). 

Recall that the arcs present in \( H \) are denoted by \( A_{par} \), and those are the only strict arcs in \( H^+_d \) and \( G^\varphi \). 
The set of arcs in \( H^+_3 \) is \( A_{par}\cup A_{sib}\cup A_{nep} \). 
The set of arcs in \( H^+_4 \) is \( A_{par}\cup A_{sib}\cup A_{nep}\cup A_{gnep}\cup A_{cos} \). 
Observe that for distance-bounded reconciliation\( (d) \), for any reconciliation \( \varphi \), the graph \( G^\varphi \) is a subgraph of \(  H^+_d \) such that (i)~\( V(G^\varphi)=V(H^+_d) \), (ii)~\( A_{par}\subseteq E(G^\varphi) \), and (iii)~\( (v,u)\in E(G^\varphi) \) for every \( (u,v)\in E(G^\varphi)\setminus A_{par} \). 


We prove the following theorem with the help of two lemmas presented in upcoming sections. 

\begin{theorem}\label{thm:no long without short dist3n4}
For \( d=3 \) \( ( \)resp.\ \( d=4 \)\( ) \) and an instance \( (H,P,\sigma) \) of distance-bounded reconciliation\( (d) \), for a reconciliation \( \varphi \), if \( G^\varphi \) does not contain any cycle of \Crefrange{itm:tri1}{itm:tetra2} displayed in \Cref{fig:typeset 1} \( ( \)resp.\ \Crefrange{itm:tri1}{itm:nona9} displayed in \Crefrange{fig:typeset 1}{fig:typeset 9}\( ) \), then \( \varphi \) is strongly acyclic. 
\end{theorem}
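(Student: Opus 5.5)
We argue by contrapositive and minimality. Suppose \( \varphi \) is not strongly acyclic, so \( G^\varphi \) contains a cycle with at least one strict arc. Among all such cycles we pick one, \( C \), minimizing first the number of arcs and then the total ``height'' (sum of levels of its vertices). Our goal is to show that \( C \) itself is one of the listed types (\Crefrange{itm:tri1}{itm:tetra2} for \( d=3 \), resp.\ \Crefrange{itm:tri1}{itm:nona9} for \( d=4 \)). The only property of \( G^\varphi \) we use is the one observed above. It is a spanning subgraph of \( H^+_d \) containing \( A_{par} \), and every non-strict arc lies in a digon. So the argument is purely about cycles in such subgraphs of \( H^+_d \), and the arcs of \( C \) come from \( A_{par}\cup A_{sib}\cup A_{nep} \) (plus \( A_{gnep}\cup A_{cos} \) when \( d=4 \)).

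The first key step is a level-change analysis. A strict arc goes down one level. A relaxed arc changes the level by at most \( d-2 \), i.e.\ by \( 0 \) or \( \pm1 \) for \( d=3 \), and by \( 0,\pm1,\pm2 \) for \( d=4 \). Moreover, any upward move must be via a relaxed arc from a vertex to a vertex that is a sibling of an ancestor at distance \( \le d-2 \) above it. Since \( C \) contains a strict arc, it must climb back up using \( A_{nep} \) (or \( A_{gnep}, A_{cos} \)) arcs pointing upward.

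The second step is a locality lemma, which I expect to be the first of the two lemmas the paper announces. Consider the highest-level vertex \( t \) of \( C \), or more usefully the lowest common ancestor \( a \) of all vertices of \( C \). Every vertex of \( C \) lies in \( S_{c_1}\cup S_{c_2} \), where \( c_1,c_2 \) are the children of \( a \). The cycle must cross between the two subtrees. Crossing arcs are relaxed and join vertices within distance \( d \) through \( a \), so they stay within depth \( d-1 \) below \( a \).

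The third step is a shortcut lemma, likely the second announced lemma. Whenever \( C \) makes a long excursion deep into \( S_{c_i} \) and returns, we replace the excursion by a shorter path. The replacement uses either the tree path from the entry point down, or the digon reverse of a relaxed arc, and it must still contain a strict arc. For instance, a strict arc \( (x,y) \) followed later by a relaxed arc \( (z,w) \) with \( w \) a sibling of \( x \) can be short-circuited via the sibling digon \( \{x,w\} \), if that digon is present. The difficulty is that a digon is present only if it is in \( G^\varphi \), not merely in \( H^+_d \). So each shortcut must use only arcs already on \( C \), their reverses, or tree arcs. I would do this by case analysis on how consecutive relaxed arcs interact. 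Two consecutive relaxed arcs moving within one sibling-family can be replaced by their transitive composition only if it is an actual arc, and otherwise the cycle is already confined to a bounded neighbourhood.

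The fourth step concludes. By minimality no shortcut applies, so \( C \) lives within bounded depth of \( a \) and contains at most a bounded number of vertices per level, giving a finite catalogue. Enumerating these cycles up to symmetry (swapping children, reversing digon arcs) yields exactly the triangles and tetragons for \( d=3 \), and up to nonagons for \( d=4 \).

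The main obstacle is the shortcut lemma for \( d=4 \). There the \( A_{cos} \) and \( A_{gnep} \) arcs allow cycles to wander across many cousin families before returning, and I cannot assume that the arcs needed for shortcuts are present in \( G^\varphi \). I would first try an induction on the number of distinct ``families'' (children of a common grandparent) the cycle visits. I would show that a minimal cycle visiting three or more such families contains a chord formed from its own digons that yields a shorter strict cycle.
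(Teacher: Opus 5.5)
Your proposal is an outline rather than a proof: the step that carries all the content is missing, and the target you set for it is wrong for $d=4$.

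\textbf{The missing step.} The locality observation is correct: all of $C$ lies in $S_{c_1}\cup S_{c_2}$, and crossing arcs stay within depth $d-1$ of $a$. But it says nothing about how deep $C$ can go inside $S_{c_1}$ or $S_{c_2}$, so no finite catalogue follows from it. The ``shortcut lemma'' that is meant to supply the depth bound is left open; you flag it yourself as the main obstacle and only suggest an induction on families. One of the moves you propose, replacing two consecutive relaxed arcs by their composition, is not available, since the composite arc need not lie in $G^\varphi$.

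The paper's proof gets its grip from a specific anchor. Let $C$ be a shortest strict cycle and $L$ its set of highest-level vertices. Then $C[L]$ is acyclic, because any cycle in $C[L]$ would be all of $C$, yet $C[L]$ has no strict arc. So $C[L]$ has a source $u$, and the arc $(w,u)$ into $u$ must come from below: $w$ is a child or grandchild of the sibling $v$ of $u$. From there the argument is driven by two facts that use only tree arcs, which are always present in $G^\varphi$:
\begin{enumerate}
\item Whenever the $C$-path from a vertex to one of its descendants is longer than the tree path, replacing it gives a shorter strict cycle. In particular, a parent and child that both lie on $C$ are joined on $C$ by their tree arc.
\item If $C$ leaves a subtree $S_x$, it must re-enter it, and the distance bound leaves only a handful of candidates for the re-entering arc.
\end{enumerate}
Iterating these cut arguments on successively deeper subtrees is what makes the case analysis terminate.

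\textbf{The wrong target.} You aim to show that the minimal $C$ is itself one of the listed types. For $d=4$ this is false, so your enumeration cannot come out ``exactly'' as the list. Let $u,v$ be siblings, $v_1,v_2$ the children of $v$, and $w$ a child of $v_1$. Let $G$ be $H$ plus the digons $\{u,v\}$, $\{v_2,w\}$, $\{w,u\}$. Then:
\begin{itemize}
\item There is no strict cycle of length $3$.
\item $(u,v,v_2,w)$ and $(u,v,v_1,w)$ are both strict $4$-cycles with the same level sum, so your tie-breaker may select $(u,v,v_2,w)$.
\item $(u,v,v_2,w)$ is not on the list; only $(u,v,v_1,w)$ is (\Cref{itm:tetra3}).
\end{itemize}
What the paper actually establishes is weaker and correct: either a shorter strict cycle exists, or $G$ contains a listed cycle assembled from some arcs of $C$ together with tree paths of $H$. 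For example, it forms $(u,v;P;w)$ from the $C$-arcs $(u,v)$, $(w,u)$ and the tree path $P$.

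Also, the $d=4$ list does not stop at nonagons; it contains $10$- and $11$-cycles such as \Cref{itm:undeca1}.
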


\section[The case $d=3$]{\boldmath The case $d=3$}\label{sec:d=3}

\begin{lemma}\label{lem:no long without short dist3}
Let \( H \) be a phylogenetic tree, and let \( G \) be a spanning subgraph of \( H^+_3 \) such that (i)~\( A_{par}\subseteq E(G) \), and (ii)~\( (v,u)\in E(G) \) for every \( (u,v)\in E(G)\setminus A_{par} \). 
Suppose that \( G \) does not contain any cycle of the following forms: 
\begin{enumerate}[\textnormal{Type }1.]
\item \( (h_j,h_k,h_{k_1}) \), where \( h_j \) and \( h_k \) are siblings, and \( h_{k_1} \) is the child of \( h_k \); 
\item \( (h_j,h_{j_1},h_k,h_{k_1}) \), where \( h_j \) and \( h_k \) are siblings, \( h_{j_1} \) is a child of \( h_j \), and \( h_{k_1} \) is a child of \( h_k \); and 
\item \( (h_j,h_{k_1},h_{k_3},h_{k_2}) \), where \( h_j \) and \( h_k \) are siblings, the children of \( h_k \) are \( h_{k_1} \) and \( h_{k_2} \), and \( h_{k_3} \) is a child of \( h_{k_1} \).
\end{enumerate}
Then, there is no (directed) cycle in \( G \) containing at least one arc from \( A_{par} \). 
\end{lemma}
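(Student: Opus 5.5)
Assume for contradiction that \( G \) has a directed cycle containing an arc of \( A_{par} \), and among all such cycles fix one, \( C \), that is minimal. The measure I would use is first the highest level reached by a vertex of \( C \) (as high as possible, i.e.\ the top vertex as close to the root as possible is \emph{not} the goal; rather we want the cycle to live in a smallest subtree), and then the length of \( C \). Concretely, let \( x \) be the lowest node of \( H \) such that \( S_x \) contains all vertices of \( C \). The arcs of \( H^+_3 \) join vertices at levels differing by at most \( 1 \): parent arcs go down one level, sibling arcs stay level, and nephew arcs change level by one. So \( C \) must contain at least one upward move, and every upward move is a nephew arc \( (h_{k_1},h_j) \) with \( h_{k_1} \) a child of \( h_k \) and \( h_j \) the sibling of \( h_k \). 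The plan is to use such upward arcs, together with the digon property (ii), to shortcut \( C \) until it becomes one of Types 1–3.

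\textbf{Step 1 (localisation).} Let \( a,b \) be the children of \( x \). By minimality of \( x \), \( C \) meets both \( S_a \) and \( S_b \). The only arcs of \( H^+_3 \) between \( S_a \) and \( S_b \) join a node of level \( \ell(a) \) or \( \ell(a)-1 \) on one side to a node of level \( \ell(a) \) or \( \ell(a)-1 \) on the other, and never two grandchildren-level nodes. Hence every crossing of \( C \) between the two sides uses one of the arcs
\[ \{a,b\},\quad \{a,b_i\},\quad \{b,a_i\}, \]
where \( a_i,b_i \) denote the children of \( a \) and \( b \).

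\textbf{Step 2 (each excursion can be replaced by a strict path).} Consider a maximal segment of \( C \) inside \( S_a \) that leaves toward \( S_b \) and later returns. Its endpoints lie in \( \{a,a_1,a_2\} \). Using the digon arcs in reverse, I would argue that the whole cycle can be rewritten so that it essentially lives on the six nodes \( a,a_1,a_2,b,b_1,b_2 \) plus one further level below them. Inside a single side, any portion of \( C \) that goes deep into \( S_{a_i} \) and comes back up must itself come back up through a nephew arc at a lower level. If that portion contains a strict arc, it is a shorter cycle with a strict arc in a smaller subtree. If it contains no strict arc, it is made up of relaxed digon arcs only, and can be reversed or dropped. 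Either way the minimality of \( C \) is contradicted, so \( C \) visits only nodes of \( S_a\cup S_b \) at depth at most \( 2 \) below \( x \).

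\textbf{Step 3 (finite case check).} The cycle \( C \) now lives on the finite configuration \( \{a,b,a_i,b_i,a_{ij},b_{ij}\} \). It contains a strict arc, and it must climb back up after each descent via an arc \( (a_i,b) \), \( (b_i,a) \), or a nephew arc one level down (e.g.\ \( (a_{ij},a_{3-i}) \)). A case analysis on which strict arc is used and how the cycle returns should show:
\begin{itemize}
\item descending \( a\to a_i \) and returning via \( (a_i,b),(b,a) \) gives Type 1;
\item descending on both sides gives Type 2;
\item descending two levels \( a_i\to a_{ij} \) and returning via a nephew arc to \( a_{3-i} \) gives Type 3, rooted at \( a \) with sibling \( b \).
\end{itemize}

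\textbf{Main obstacle.} The hardest part is Step 2: justifying rigorously that deep excursions can always be cut off without losing the strict arc. I would handle it by induction on the height of \( x \). At each upward nephew arc, the digon property (ii) lets me close off a subcycle. That subcycle either contains a strict arc, giving a smaller counterexample, or it does not, in which case it can be excised.
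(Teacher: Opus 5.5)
Your Step 1 is correct. With $x$ lowest, one gets $x\notin V(C)$, $C\subseteq S_a\cup S_b$, and every arc of $C$ between the two sides has an endpoint in $\{a,b\}$. The genuine gap is Step 2, on which Step 3 rests: it has no working mechanism. A maximal segment of $C$ inside some $S_{a_i}$ is a path, not a cycle, and $G$ need not contain any arc that closes it up inside a smaller subtree. In $H^+_3$ the in-neighbours of a node $y$ are its parent, its sibling, the children of its sibling and the sibling of its parent, all outside $S_y$. So no cycle inside $S_{a_i}$ passes through $a_i$, and a cycle keeping the strict arc $(a_i,a_{i1})$ must re-enter $a_i$ from outside $S_{a_i}$.

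For a segment such as $a_i\to a_{i1}\to a_{i2}\to a_{3-i}$, nothing in the hypotheses supplies such an arc. Property (ii) applied to $(a_{i2},a_{3-i})$ only yields $(a_{3-i},a_{i2})$, which closes a strict-arc-free $2$-cycle. More generally, (ii) only lets you traverse an arc already in $G$ backwards. It does not create the arc needed to close a segment, nor an arc joining the two ends of a strict-arc-free segment you want to `drop'. Reversing that segment does not reconnect the cycle either, and a `rewritten' cycle is no longer known to be minimal. So the depth bound and the proposed induction on the height of $x$ are unsupported. Step 3, for its part, only exhibits the three forbidden configurations and never shows that every other way for $C$ to close up is impossible.

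You do not need any depth control: your Step 1 already sets up the paper's local argument at the top.

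\textbf{Choosing $u$.} The highest vertices of $C$ lie in $\{a,b\}$. Unless $C$ is the $2$-cycle on $a,b$, some $u\in\{a,b\}\cap V(C)$ has its in-arc in $C$ coming from below. This is necessarily a nephew arc $(w,u)$ with $w$ a child of the sibling $v$ of $u$.

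\textbf{The crossing arc from $S_u$ to $S_v$.} The arc by which $C$ returns from $S_u$ to $S_v$ cannot be $(u,v)$, nor $(u_1,v)$ with $u_1$ a child of $u$: together with the parent arcs and $(w,u)$ these give Type 1 and Type 2 cycles in $G$. It cannot be $(u,w)$ either, since then $C$ would be a $2$-cycle. Hence it is $(u,w')$ with $w'$ the other child of $v$.

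\textbf{The arc entering $S_w$.} Now consider how $C$ enters $S_w$; the arc $(u,w)$ is excluded since the out-arc of $u$ is $(u,w')$.
If it is $(v,w)$, replace $v,w,u,w'$ by the parent arc $(v,w')$ to get a shorter cycle with a strict arc.
If it is $(w',w)$, then $C=(u,w',w)$ has no strict arc.
If it is $(y,w)$ with $y$ a child of $w'$, then $(u,w',y,w)$ is a Type 3 cycle.
If it is $(w',z)$ with $z$ a child of $w$, replace $w,u,w',z$ by the parent arc $(w,z)$.

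Only a few arcs of $C$ near $u$ and $w$ are ever examined, which is why no bound on how deep $C$ goes is needed.
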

\begin{proof}
Assume the contrary. 
Let \( C \) be a cycle of minimum length in \( G \) containing at least one arc from \( A_{par} \). 
Let \( L \) be the set of vertices in \( C \) at the highest level in the rooted tree \( H \). 
Consider \( C[L] \), the subgraph of \( C \) induced by \( L \). 

Note that in \( G \), an arc  between two vertices on the same level must be from \( A_{sib} \), and an arc to a vertex from a vertex at a lower level must be from \( A_{nep} \). 
Recall that \( S_v \) denotes the vertex set of the subtree of \( H \) rooted at \( v \), for each vertex \( v \) of \( H \). 
In particular, all arcs in \( C[L] \) are from \( A_{sib} \). 
Note that if \( C[L] \) contains a cycle \( C^* \), then \( C^* \) must be the cycle \( C \) itself (since \( C^* \) is a subgraph of \( C \), and \( C^* \) cannot be a shorter cycle); this is a contradiction since \( C[L] \) does not contain any arc from \( A_{par} \) unlike \( C \). 
Hence, \( C[L] \) does not contain any cycle (i.e., \( C[L] \) is a DAG). 
Let \( u \) be a source vertex in \( C[L] \). 

Let \( w \) be the start vertex of the arc to \( u \) in \( C \). 
Clearly, \( w \) cannot be at a level higher than \( u \) (by the definition of \( L \)), nor at the level of \( u \) (by the choice of \( u \)). 
Thus, \( w \) lies at level lower than \( u \). 
Hence, \( (w,u)\in A_{nep} \), which means that \( w \) is the child of the sibling of \( u \) (in \( H \)). 
Let \( v \) be the sibling of \( u \). 

Since \( (w,u) \) is an arc in \( C \) from \( S_v \) to \( S_u \), and there is no vertex in \( C \) at a level higher than \( u \) and \( v \), there is an arc \( e \) from \( S_u \) to \( S_v \) in \( C \). 
The arc \( e \) cannot be from \( u \) to \( v \) (otherwise, \( (u,v,w) \) is a Type 1 cycle in \( G \)), nor from a child \( u_1 \) of \( u \) to \( v \) (otherwise, \( (u,u_1,v,w) \) is a Type 2 cycle in \( G \)). 
Hence, \( e \) must be from \( u \) to a child \( x \) of \( v \). 
If \( x=w \), then \( C \) must the 2-cycle \( (u,w) \) and thus \( C \) does not contain any arc from \( A_{par} \); a contradiction. 
Hence, \( x\neq w \). 
Since \( (w,u) \) is an arc in \( C \) from \( S_w \) to \( V(H)\setminus S_w \), there is an arc \( e' \) in \( C \) from \( V(H)\setminus S_w \) to~\( S_w \). 

~\\
\noindent
Case~1: \( e'\in A_{par} \).\\ 
By the definition of \( L \), the start vertex of \( e' \) is \( v \), and thus \( e'=(v,w) \). 
Replacing the path \( v,w,u,x \) in \( C \) by the arc \( (v,x) \) produces a cycle in \( G \) of length smaller than \( C \) and contains an arc from \( A_{par} \), a contradiction (to the choice of \( C \)).  

~\\
\noindent
Case~2: \( e'\in A_{sib} \).\\ 
Then, \( e'=(x,w) \). 
Hence, \( C \) must be the cycle \( (u,x,w) \), and thus does not contain any arc from \( A_{par} \); a contradiction. 

~\\
\noindent
Case~3: \( e'\in A_{nep} \).\\ 
In this case, either (i)~\( e' \) is from a child \( y \) of \( x \) to \( w \), or (ii)~\( e' \) is from \( x \) to a child \( z \) of \( w \). 
In Subcase~(i), \( (u,x,y,w) \) is a Type 3 cycle in \( G \), a contradiction. 
In Subcase~(ii), replacing the path \( w,u,x,z \) in \( C \) by the arc \( (w,z) \) results in a cycle in \( G \) of length smaller than \( C \), a contradiction. 

Since all three cases lead to contradictions, the proof is complete. 
\end{proof}

\section[The case $d=4$]{\boldmath The case $d=4$}
For the case \( d=4 \), we present a result similar to \Cref{lem:no long without short dist3}, and it involves 110 short cycles. 
These short cycles are exhibited in \Crefrange{fig:typeset 1}{fig:typeset 9}. 

In those figures, the arcs with arrow tip in the middle are parent-to-child arcs (i.e., arcs from \( A_{par} \)), whereas the other arcs have arrow tips at the end; 
also, the gray dashed lines are the edges in \( H \) that are not part of the short cycle shown. 
Verbal descriptions of the cycles are provided in \Cref{sec:short cycle list in words}. 

\begin{figure}[hbtp]
\tikzset{
dot/.style={draw,circle,minimum size=1.5,fill=black,inner sep=0},
tree-edges/.style={dashed, gray},
}
\centering
\begin{subfigure}[c]{0.25\textwidth}
\centering
%
\caption*{\Cref{itm:tri1}}
\end{subfigure}%
\begin{subfigure}[c]{0.25\textwidth}
\centering
%
\caption*{\Cref{itm:tetra1}}
\end{subfigure}%
\begin{subfigure}[c]{0.25\textwidth}
\centering
%
\caption*{\Cref{itm:tetra2}}
\end{subfigure}%
\begin{subfigure}[c]{0.25\textwidth}
\centering
%
\caption*{\Cref{itm:tetra3}}
\end{subfigure}%

\vspace{5pt}

\begin{subfigure}[c]{0.25\textwidth}
\centering
%
\caption*{\Cref{itm:penta1}}
\end{subfigure}%
\begin{subfigure}[c]{0.25\textwidth}
\centering
%
\caption*{\Cref{itm:hexa1}}
\end{subfigure}%
\begin{subfigure}[c]{0.25\textwidth}
\centering
%
\caption*{\Cref{itm:tetra4}}
\end{subfigure}%
\begin{subfigure}[c]{0.25\textwidth}
\centering
%
\caption*{\Cref{itm:penta1.5}}
\end{subfigure}%

\begin{subfigure}[c]{0.26\textwidth}
\centering
%
\caption*{\Cref{itm:penta2}}
\end{subfigure}%
\begin{subfigure}[c]{0.26\textwidth}
\centering
%
\caption*{\Cref{itm:hexa2}}
\end{subfigure}%
\begin{subfigure}[c]{0.26\textwidth}
\centering
%
\caption*{\Cref{itm:hexa3}}
\end{subfigure}%
\begin{subfigure}[c]{0.26\textwidth}
\centering
%
\caption*{\Cref{itm:hepta1}}
\end{subfigure}%

\vspace{5pt}

\begin{subfigure}[c]{0.26\textwidth}
\centering
%
\caption*{\Cref{itm:hepta2}}
\end{subfigure}%
\begin{subfigure}[c]{0.26\textwidth}
\centering
%
\caption*{\Cref{itm:octa1}}
\end{subfigure}%
\begin{subfigure}[c]{0.2\textwidth}
\centering
%
\caption*{\Cref{itm:tri2}}
\end{subfigure}%
\begin{subfigure}[c]{0.2\textwidth}
\centering
%
\caption*{\Cref{itm:tri3}}
\end{subfigure}%

\vspace{5pt}

\begin{subfigure}[c]{0.2\textwidth}
\centering
%
\caption*{\Cref{itm:tri4}}
\end{subfigure}%
\begin{subfigure}[c]{0.2\textwidth}
\centering
%
\caption*{\Cref{itm:penta5}}
\end{subfigure}%
\caption{}
\label{fig:typeset 1}
\end{figure}

\begin{figure}[hbtp]
\tikzset{
dot/.style={draw,circle,minimum size=1.5,fill=black,inner sep=0},
tree-edges/.style={dashed, gray},
}
\centering
\begin{subfigure}[c]{0.25\textwidth}
\centering
%
\caption*{\Cref{itm:penta6}}
\end{subfigure}%
\begin{subfigure}[c]{0.25\textwidth}
\centering
%
\caption*{\Cref{itm:penta3}}
\end{subfigure}%

\vspace{5pt}

\begin{subfigure}[c]{0.265\textwidth}
\centering
%
\caption*{\Cref{itm:hexa4}}
\end{subfigure}%
\caption*{\Cref{itm:hepta3}}
\end{subfigure}%
\begin{subfigure}[c]{0.2\textwidth}
\centering
%
\caption*{\Cref{itm:tetra8}}
\end{subfigure}%

\vspace{5pt}

\begin{subfigure}[c]{0.25\textwidth}
\centering
%
\caption*{\Cref{itm:penta10}}
\end{subfigure}%

\vspace{5pt}

\begin{subfigure}[c]{0.25\textwidth}
\centering
%
\caption*{\Cref{itm:hexa11}}
\end{subfigure}%

\vspace{5pt}

\begin{subfigure}[c]{0.25\textwidth}
\centering
%
\caption*{\Cref{itm:hepta5}}
\end{subfigure}%
\begin{subfigure}[c]{0.25\textwidth}
\centering
%
\caption*{\Cref{itm:hepta7}}
\end{subfigure}%
\caption{}
\label{fig:typeset 2}
\end{figure}

\begin{figure}[hbtp]
\tikzset{
dot/.style={draw,circle,minimum size=1.5,fill=black,inner sep=0},
tree-edges/.style={dashed, gray},
}
\centering
\begin{subfigure}[c]{0.25\textwidth}
\centering
%
\caption*{\Cref{itm:octa2}}
\end{subfigure}%
\begin{subfigure}[c]{0.25\textwidth}
\centering
%
\caption*{\Cref{itm:octa3}}
\end{subfigure}%

\begin{subfigure}[c]{0.33\textwidth}
\centering
%
\caption*{\Cref{itm:nona1}}
\end{subfigure}%
\caption{}
\label{fig:typeset 3}
\end{figure}

\begin{figure}[hbtp]
\tikzset{
dot/.style={draw,circle,minimum size=1.5,fill=black,inner sep=0},
tree-edges/.style={dashed, gray},
}
\centering
\begin{subfigure}[c]{0.25\textwidth}
\hspace{5pt}
\end{subfigure}%
\begin{subfigure}[c]{0.33\textwidth}
\centering
%
\caption*{\Cref{itm:hepta10}}
\end{subfigure}%

\vspace{2pt}

\begin{subfigure}[c]{0.33\textwidth}
\centering
%
\caption*{\Cref{itm:hexa17}}
\end{subfigure}%
\begin{subfigure}[c]{0.33\textwidth}
%
\caption*{\Cref{itm:hepta12}}
\end{subfigure}%

\begin{subfigure}[c]{0.3\textwidth}
\hspace{5pt}
\end{subfigure}%
\begin{subfigure}[c]{0.33\textwidth}
\centering
%
\caption*{\Cref{itm:octa6}}
\end{subfigure}%
\caption{}
\label{fig:typeset 4}
\end{figure}

\begin{figure}[hbtp]
\tikzset{
dot/.style={draw,circle,minimum size=1.5,fill=black,inner sep=0},
tree-edges/.style={dashed, gray},
}
\centering
\begin{subfigure}[c]{0.33\textwidth}
\centering
%
\caption*{\Cref{itm:nona2}}
\end{subfigure}%

\begin{subfigure}[c]{0.33\textwidth}
\centering
%
\caption*{\Cref{itm:nona3}}
\end{subfigure}%

\begin{subfigure}[c]{0.33\textwidth}
\centering
%
\caption*{\Cref{itm:nona4}}
\end{subfigure}%
\begin{subfigure}[c]{0.33\textwidth}
\centering
%
\caption*{\Cref{itm:deca1}}
\end{subfigure}%
\caption{}
\label{fig:typeset 5}
\end{figure}

\begin{figure}[hbtp]
\tikzset{
dot/.style={draw,circle,minimum size=1.5,fill=black,inner sep=0},
tree-edges/.style={dashed, gray},
}
\centering
\begin{subfigure}[c]{0.33\textwidth}
\centering
%
\caption*{\Cref{itm:deca2}}
\end{subfigure}%
\caption{}
\label{fig:typeset 6}
\end{figure}

\begin{figure}[hbtp]
\tikzset{
dot/.style={draw,circle,minimum size=1.5,fill=black,inner sep=0},
tree-edges/.style={dashed, gray},
}
\centering
\begin{subfigure}[c]{0.33\textwidth}
\centering
%
\caption*{\Cref{itm:deca3}}
\end{subfigure}%
\begin{subfigure}[c]{0.33\textwidth}
\centering
%
\caption*{\Cref{itm:deca4}}
\end{subfigure}%
\begin{subfigure}[c]{0.33\textwidth}
\centering
%
\caption*{\Cref{itm:undeca1}}
\end{subfigure}%

\begin{subfigure}[c]{0.2\textwidth}
\centering
%
\caption*{\Cref{itm:penta15}}
\end{subfigure}%
\caption{}
\label{fig:typeset 7}
\end{figure}

\begin{figure}[hbtp]
\tikzset{
dot/.style={draw,circle,minimum size=1.5,fill=black,inner sep=0},
tree-edges/.style={dashed, gray},
}
\centering
\begin{subfigure}[c]{0.33\textwidth}
\centering
%
\caption*{\Cref{itm:hexa24}}
\end{subfigure}%
\caption{}
\label{fig:typeset 8}
\end{figure}

\begin{figure}[hbtp]
\tikzset{
dot/.style={draw,circle,minimum size=1.5,fill=black,inner sep=0},
tree-edges/.style={dashed, gray},
}
\centering
\begin{subfigure}[c]{0.3\textwidth}
\hspace{5pt}
\end{subfigure}%
\begin{subfigure}[c]{0.33\textwidth}
\centering
%
\caption*{\Cref{itm:hepta21}}
\end{subfigure}%

\vspace{2pt}

\begin{subfigure}[c]{0.33\textwidth}
\centering
%
\caption*{\Cref{itm:octa15}}
\end{subfigure}%

\vspace{2pt}

\begin{subfigure}[c]{0.33\textwidth}
\centering
%
\caption*{\Cref{itm:nona9}}
\end{subfigure}%
\caption{}
\label{fig:typeset 9}
\end{figure}


\begin{lemma}\label{lem:no long without short dist4}
Let \( H \) be a phylogenetic tree, and let \( G \) be a spanning subgraph of \( H^+_4 \) such that (i)~\( A_{par}\subseteq E(G) \), and (ii)~\( (v,u)\in E(G) \) for every \( (u,v)\in E(G)\setminus A_{par} \). 
Suppose that \( G \) does not contain any cycle of \Crefrange{itm:tri1}{itm:nona9} displayed in \Crefrange{fig:typeset 1}{fig:typeset 9}. 
Then, there is no (directed) cycle in \( G \) containing at least one arc from \( A_{par} \). 
\end{lemma}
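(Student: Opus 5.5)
We follow the template of the proof of \Cref{lem:no long without short dist3}. Assume the contrary, and let \( C \) be a cycle of minimum length in \( G \) containing at least one arc from \( A_{par} \). Let \( L \) be the set of vertices of \( C \) at the highest level. Among arcs of \( H^+_4 \), those joining two vertices on the same level lie in \( A_{sib}\cup A_{cos} \). Those going up one level lie in \( A_{nep} \), and those going up two levels lie in \( A_{gnep} \). As before, \( C[L] \) cannot contain a cycle: such a cycle would be \( C \) itself, which has a strict arc, while \( C[L] \) has none. So \( C[L] \) is a DAG, and we pick a source \( u \) of it. The arc \( (w,u) \) of \( C \) entering \( u \) then comes from a lower level, so \( (w,u)\in A_{nep}\cup A_{gnep} \). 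Hence \( w \) is a child or a grandchild of the sibling \( v \) of \( u \).

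The minimality of \( C \) is used repeatedly through two tools. The first is a \emph{shortcut}: whenever \( C \) contains a path \( a,\dots,b \) of length at least two and \( (a,b) \) is an arc of \( G \) (for example \( (a,b)\in A_{par} \), or \( (b,a) \) is a relaxed arc already known to lie in \( G \)), we replace the path by the arc. The resulting cycle still contains a strict arc, unless the discarded path contained the only strict arcs; in that case the shortcut itself must be strict, or the cycle we get is one of the forbidden types. The second tool is a \emph{cut argument}: if \( C \) has an arc leaving a set \( S_x \), then it also has an arc entering \( S_x \), and vice versa.

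We first apply the cut argument to \( S_u \) and \( S_v \), the highest subtrees. Since no vertex of \( C \) lies above the level of \( u \), \( C \) must also contain an arc \( e \) from \( S_u \) to \( S_v \). Because \( d\le 4 \), the arc \( e \) goes from \( u \), a child of \( u \), or a grandchild of \( u \) to \( v \), a child of \( v \), or a grandchild of \( v \), with total depth at most \( 2 \) (nodes at depth \( 3 \) or more would be at distance exceeding \( 4 \)).

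This gives a bounded family of configurations for the pair \( (w,u),e \). In each one, we either find a forbidden short cycle directly, or shortcut \( C \), or we descend. Descending means applying the cut argument to a further subtree, such as \( S_w \), \( S_{p(w)} \), or the subtree of the tail of \( e \), and obtaining a further arc \( e' \). We then branch on \( e'\in A_{par},A_{sib},A_{nep},A_{gnep},A_{cos} \), exactly as in Cases~1--3 of the \( d=3 \) proof. Every arc of \( H^+_4 \) spans at most \( 4 \) tree edges, and all relevant vertices lie within depth \( 2 \) of \( u \) or \( v \). So the branching lives inside a bounded window of \( H \): the subtree of \( p(u) \) truncated at depth about \( 4 \) or \( 5 \).

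Each branch must end either in a contradiction with minimality, or with a cycle all of whose vertices lie in this window. In the latter case we verify that the cycle is one of the listed types. Since the window is bounded, the case tree is finite. Every cycle met at a leaf of the case tree must appear in \Crefrange{itm:tri1}{itm:nona9}; the list was presumably produced precisely this way.

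The main obstacle is organizing this case analysis so that it provably terminates and is exhaustive. Unlike the \( d=3 \) case, cousin arcs let \( C \) re-enter at the top level, and \( A_{gnep} \) arcs let it skip a level. As a result, a single cut argument no longer pins down \( C \). I would first try a potential argument: track the pair (number of vertices of \( C \) outside the window, length of \( C \)). Each application of the cut argument either produces a shortcut or forces another vertex of \( C \) into the window. Once \( C \) lies entirely in the window, it is a cycle in a fixed graph of bounded size; a finite (computer-assisted) enumeration of its minimal cycles containing a strict arc, taken up to symmetry, then matches the 110 listed types. If a hand proof is desired, the branches should be grouped by the levels of the tail and head of \( e \), following the same three-way split as in the \( d=3 \) proof.
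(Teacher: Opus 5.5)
Your proposal is a strategy, not a proof. The content of the lemma is an exhaustive case analysis that provably closes up with the 110 listed cycles, and that is exactly what you defer (``presumably'', ``I would first try''). In addition, two structural steps you do assert fail as stated.

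The first is the cut argument on \( S_u \) and \( S_v \). For \( d=4 \), arcs of \( A_{cos} \) join vertices on the same level, so the top level \( L \) may contain cousins of \( u \). Hence ``no vertex of \( C \) lies above \( u \)'' does not give \( V(C)\subseteq S_u\cup S_v \). The cycle can return as \( u,t,v,\dots,w,u \) with \( t \) a cousin of \( u \), using two cousin arcs and no arc from \( S_u \) to \( S_v \) at all. The paper clears this before any such cut argument, in two steps.
\begin{itemize}
\item It proves \( v\notin C \) by examining the arc entering \( v \). If that arc comes from \( u \) or a child or grandchild of \( u \), it closes a short listed cycle with the tree path from \( v \) to \( w \). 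If it comes from a cousin \( t \), one uses that the only arcs of \( C \) between \( S_{p(u)} \) and \( S_{p(t)} \) join \( \{u,v\} \) to \( t \) or its sibling. This holds because \( p(u),p(t) \) lie above \( L \) and all other pairs are at distance greater than \( 4 \). Tracking the returning arc then yields \Crefrange{itm:tetra4}{itm:octa1}.
\item It deduces that no cousin of \( u \) lies on \( C \), since a returning arc would have to enter \( u \), whose in-arc is already \( (w,u) \).
\end{itemize}
Only after these two claims does your arc \( e \) exist.

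The second failing step is termination. Nothing supports the claim that all relevant vertices stay within depth \( 2 \) of \( u \) or \( v \), or that the analysis lives in the subtree of \( p(u) \) truncated at depth \( 4 \) or \( 5 \). In fact this contradicts the list you intend to match: \Cref{itm:undeca1} uses \( h_{k_{16}} \), seven levels below \( h_k=v \). Repeated cut arguments on ever deeper subtrees can a priori descend forever. Your potential does not decrease for any stated reason: a cut argument does not push vertices of \( C \) into a window, it only exhibits another arc of the fixed cycle \( C \).

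What stops the descent in the paper is a specific minimality fact. If a vertex \( b \) and its parent \( a \) both lie on \( C \), then \( (a,b)\in C \), since otherwise the \( a,b \)-path of \( C \) could be replaced by the strict arc \( (a,b) \). Combined with out-degree one in \( C \), this excludes vertices outright; for example, no child of \( w \) is on \( C \), because the out-arc of \( w \) is \( (w,u) \). Together with the claim that the child of \( v \) on the \( v,w \)-path is not on \( C \), it pins down the possible in-arcs level by level until a listed cycle is forced. Without such facts there is no bounded window to hand to an enumeration.

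Finally, restrict your shortcut tool to strict replacements (tree arcs or tree paths). Replacing a path of \( C \) by a relaxed arc may discard every strict arc, which yields neither a contradiction nor a listed type.
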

\begin{proof}
Assume the contrary. 
Let \( C \) be a shortest cycle in \( G \) (i.e., a cycle of minimum length) containing (at least one) arc from \( A_{par} \). 

\begin{myclaim}\label{clm:no parent without link arc}
If a vertex \( b \) and its parent \( a \) are in \( C \), then the arc \( (a,b) \) is in \( C \). 
\end{myclaim}
\noindent
Suppose that a vertex \( b \) and its parent \( a \) are in \( C \), but \( (a,b) \) is not in \( C \). 
Since \( (a,b) \) is not in \( C \), the distance from \( a \) to \( b \) in \( C \) is at least 2. 
As a result, the \( a,b \)-path in \( C \) can be replaced by the arc \( (a,b) \) to obtain a shorter cycle containing arc from \( A_{par} \). 
This contradiction proves \Cref{clm:no parent without link arc}. 
~\\

Note that in \( G \), an arc between two vertices on the same level must be from \( A_{sib} \) or \( A_{cos} \), and an arc to a vertex from a vertex at a lower level must be from \( A_{nep} \) or \( A_{gnep} \). 

Let \( L \) be the set of vertices in \( C \) at the highest level in the rooted tree \( H \). 
Consider \( C[L] \), the subgraph of \( C \) induced by \( L \). 
Note that if \( C[L] \) contains a cycle \( C^* \), then \( C^* \) must be the cycle \( C \) itself (since \( C^* \) is a subgraph of \( C \), and \( C^* \) cannot be a shorter cycle); this is a contradiction since \( C[L] \) does not contain any arc from \( A_{par} \) unlike \( C \). 
Hence, \( C[L] \) does not contain any cycle (i.e., \( C[L] \) is a DAG). 
Let \( u \) be a source vertex in \( C[L] \). 

Let \( w \) be the start vertex of the arc to \( u \) in \( C \). 
Clearly, \( w \) cannot be at a level higher than \( u \) (by the definition of \( L \)), nor at the level of \( u \) (by the choice of \( u \)). 
Thus, \( w \) lies at level lower than \( u \). 
Hence, \( (w,u)\in A_{nep}\cup A_{gnep} \), which means that \( w \) is a child or grandchild of \( v \), where \( v \) is the sibling of \( u \). 
Let \( P \) be the \( v,w \)-path in \( H \). 
Let \( v_1 \) be the child of \( v \) which is an ancestor of \( w \) (i.e., either \( v_1=w \) or \( P= v,v_1,w \)). 

\begin{myclaim}\label{clm:v not in C}
\( v \) is not in \( C \). 
\end{myclaim}
\noindent
%
On the contrary, assume that \( v \) is in \( C \). 
Consider the arc \( e \) to \( v \) in \( C \). 
By the definition of \( L \), the arc \( e \) is from the same level as \( v \) or from a lower level. 
Hence, one of the following holds: \( e=(u,v) \), arc \( e \) is from a child of \( u \) to \( v \), arc \( e \) is from a grandchild of \( u \) to \( v \), or \( e \) is from a cousin of \( v \) to \( v \).
If \( e=(u,v) \), then \( (u,v;P;w) \) is either a \mytype{itm:tri1} cycle or a \mytype{itm:tetra3} cycle in \( G \) (a contradiction). 
If \( e \) is from a child \( u_1 \) of \( u \) to \( v \), then, \( (u,u_1,v;P;w) \) is either a \mytype{itm:tetra1} cycle or a \mytype{itm:penta1} cycle in \( G\). 
If \( e \) is from a grandchild of \( u \) to \( v \), say \( e=(u_3,v) \) where \( u_1 \) is a child of \( u \) and \( u_3 \) is a child of \( u_1 \), then \( (u,u_1,u_3,v;P;w) \) is either a 
\mytype{itm:penta1} cycle or a \mytype{itm:hexa1} cycle in \( G\).

Hence, to prove that \( v \) is not in \( C \), it suffices to exhibit a contradiction when \( e \) is from a cousin \( t \) of \( v \) to \( v \). 
Let \( s \) be the sibling of \( t \). 
Let the parent of \( v \) together with its descendants form the set \( S_1 \), and let the parent of \( t \) together with its descendants form the set \( S_2 \). 
Since the cycle \( C \) contains an arc from \( S_2 \) to \( S_1 \) (namely \( (t,v) \)) and \( C \) contains only vertices \( u,v,s,t \) and their descendants (by the definition of \( L \)), \( C \) contains an arc \( e' \) from \( S_1 \) to \( S_2 \). 
Since the distance between \( x \) and \( y \) in \( H \) is greater than 4 for all \( x\in S_1\setminus \{u,v\} \) and \( y\in S_2\setminus \{s,t\} \), the arc \( e'\in \{(u,t),(v,t),(u,s),(v,s)\} \). 
Clearly, \( e'\neq (v,t) \) since \( C \) is not the 2-cycle \( (v,t) \). 
Also, \( e'\neq (u,t) \) (otherwise, \( (t,v;P;w,u) \) is a \mytype{itm:tetra4} cycle or a \mytype{itm:penta1.5} cycle in \( G \)). 

We show that \( e'\neq (v,s) \). 
Assume that \( e'=(v,s) \). 
Consider the first arc \( e'' \) in \( C \) from \( S_2 \) to \( S_1 \) when \( C \) is traversed starting from \( s \). 
Similar to \( e' \), evidently \( e''\in \{(t,u),(t,v),(s,u),(s,v)\} \). 
Since \( (t,v) \) is an arc in \( C \), we have \( e''\neq (t,u) \) and \( e''\neq (t,v) \). 
since \( C \) is not the 2-cycle \( (s,v) \), we have \( e''\neq (s,v) \). 
Lastly, \( e''\neq (s,u) \) since \( (w,u) \) is an arc in \( C \). 
Thus, by contradiction, \( e''\neq (v,s) \). 
This proves that \( e'\neq (v,s) \). 

Thus, the only remaining possibility is that \( e'=(u,s) \). 
%
Let \( r \) denote the starting vertex of the arc to \( t \) in \( C \). 
Clearly, \( r \) cannot be the parent of \( t \) (by the definition of \( L \)), \( r\neq u \) (since \( C\) is a cycle), \( r\neq v \) (since \( C\) is a cycle), and \( r\notin S_1\setminus \{u,v\} \) (because \( \text{dist}_H(t,S_1\setminus \{u,v\})>4 \)). 
Therefore, \( r \) is one of the following: (i)~\( s \), (ii)~a child of \( s \), or (iii)~a grandchild of \( s \). 
Let \( Q \) denote the \( s,r \)-path in \( H \). 
Then, cycle \( (u,s;Q;r,t,v;P;w) \) in \( G \) is of (i) either \mytype{itm:penta2} or \mytype{itm:hexa2}, (ii) either \mytype{itm:hexa3} or \mytype{itm:hepta1}, or (iii) either \mytype{itm:hepta2} or \mytype{itm:octa1}. 
Thus, by contradiction, \( e \) is cannot be from a cousin of \( v \) to \( v \) either. 
Since we have a contradiction in all cases, \Cref{clm:v not in C} is proved.

\begin{myclaim}\label{clm:cousin not in C}
No cousin of \( u \) is in \( C \). 
\end{myclaim}
\noindent
If a cousin of \( u \) is in \( C \), then \( C\) must contain an arc from \( \{u,v\} \) to a cousin of \( v \) and an arc from a cousin of \( v \) to \( \{u,v\} \) (for instance, consider arcs in \( C \) between \( S_1 \) and \( S_2 \) where \( S_1 \) and \( S_2 \) are defined as in the proof of \Cref{clm:v not in C}). 
Since \( v \) is not in \( C \) (by \Cref{clm:v not in C}), the latter is an arc in \( C \) from a cousin of \( u \) to \( u \), which is a contradiction since \( (w,u) \) is an arc to \( u \) in \( C \). 
This proves \Cref{clm:cousin not in C}. 
~\\

Recall that \( S_z \) denotes the vertex set of the subtree of \( H \) rooted at \( z \), for each vertex \( z \) of  \( H \). 
Let \( v_2 \) be the sibling of \( v_1 \). 

\begin{myclaim}\label{clm:v1 not in C}
\( v_1 \) is not in \( C \). 
\end{myclaim}
\noindent
\noindent 
On the contrary, assume that \( v_1 \) is in \( C \). 
If \( P= v,v_1,w \), then \( (v_1,w) \) is an arc in \( C \) by \Cref{clm:no parent without link arc}. 
Let \( Q \) denote the \( v_1,u \)-path in \( C \) (i.e., \( v_1,u \) or \( v_1,w,u \)). 
By \Cref{clm:no parent without link arc}, children 
of \( w \) are not in \( C \). 
If \( P= v,v_1,w \), then the sibling of \( w \) is not in \( C \) by \Cref{clm:no parent without link arc}. 

Since \( (w,u) \) is an arc in \( C \) from \( S_v \) to \( V(H)\setminus S_v \), there is an arc \( e \) in \( C \) from \( V(H)\setminus S_v \) to \( S_v \). 
Since \( v \) is not in \( C \), arc \( e \) is one of the following: (i)~\( (u,v_1) \), (ii)~from a child of \( u \) to \( v_1 \), (iii)~\( (u,v_2) \), (iv)~from a child \( u_1 \) of \( u \) to \( v_2 \), (v)~from \( u \) to a child \( v_4 \) of \( v_2 \), or (vi)~from \( u \) to a child \( v_3 \) of \( v_1 \).  
If \( e=(u,v_1) \), then either \( C \) is the 2-cycle \( (u,w) \), or \( (u,v_1,w) \) is a \mytype{itm:tri2} cycle in \( G \). 
This rules out scenario (i). 
If \( e \) is from a child \( u_1 \) of \( u \) to \( v_1 \), then \( (u_1,v_1;Q;u) \) is either a \mytype{itm:tri3} or a \mytype{itm:tetra5} cycle in \( G \), thereby ruling out scenario (ii). 
We deal with scenarios (iii), (iv) and (v) below. 
If \( e \) is from \( u \) to a child \( v_3 \) of \( v_1 \), then \( v_3 \) must be \( w \) (because children of \( w \) are not in \( C \) and in case \( P= v,v_1,w \), the sibling of \( w \) is not in \( C \)), and thus \( C \) is the 2-cycle \( (u,w) \); a contradiction. 
This rules out scenario (vi). 

Note that the arc to \( v_1 \) in \( C \) is one of the following: (a)~\( (u,v_1) \), (b)~from a child of \( u \) to \( v_1 \), (c)~\( (v_2,v_1) \), (d)~from a child \( v_5 \) of \( v_2 \) to \( v_1 \), or (e) from a grandchild of \( v_2 \) to \( v_1 \). 
Scenarios (a) and (b) are already ruled out (see the last paragraph). 
Consider scenario (c). 
The arc to \( v_2 \) in \( C \) cannot be \( v_1 \) (since \( C \) is not the 2-cycle \( (v_1,v_2) \)) or a child of \( v_1 \) (by \Cref{clm:no parent without link arc}), and thus it must be from \( u \) or a child of \( u \). 
If it is from \( u \), then \( (v_1;Q;u,v_2) \) is a \mytype{itm:tri4} or a \mytype{itm:tetra7} cycle in \( G \). 
If it is from a child \( u_1 \) of \( u \), then \( (v_1;Q;u,u_1,v_2) \) is a \mytype{itm:tetra6} or a \mytype{itm:penta5} cycle in \( G \). 
Thus, scenario (c) is ruled out. 

At this point, the scenarios not ruled out are (iii), (iv), (v) and (d), (e). 
In scenario (iv), \( (u,u_1) \) is an arc in \( C \) by \Cref{clm:no parent without link arc}. 

In scenarios (iii) and (iv), there is a \( v_1,v_2 \)-path \( R \) in \( C \) (namely, \( v_1;Q;u,v_2 \) or \( v_1;Q;u,u_1,v_2 \)). 
If \( v_2 \) is in \( C \), then in scenarios (d) and (e), there is a \( v_2,v_1 \)-path \( R' \) in \( G \) (namely, \( v_2,v_5,v_1 \) or \( v_2,v_6,v_7,v_1 \) where \( v_6 \) is a child of \( v_2 \), and \( v_7 \) is a child of \( v_6 \)).  
In scenario (iii), \( R \) and \( R' \) together form a \mytype{itm:tetra2}, \mytype{itm:penta6}, \mytype{itm:penta4}, or \mytype{itm:hexa7} cycle in \( G \). 
This rules out scenario (iii). 
In scenario (vi), \( R \) and \( R' \) together form a cycle from \mytype{itm:penta3} to \mytype{itm:hepta3} in \( G \). 
This rules out scenario (vi). 

At this point, only scenario (v) and scenarios (d), (e) remain. 
Hence, we consider scenario (v) in combination with scenario (d) or (e). 
That is, \( (u,v_4) \) is in \( C \) where \( v_4 \) is a child of \( v_2 \), and there is an arc to \( v_1 \) in \( C \) from \( v_5 \) (in scenario (d)) or from a child \( v_7 \) of \( v_5 \) (in scenario (e)), where \( v_5 \) is a child of \( v_2 \). 
If \( v_4=v_5 \), then either \( C=(v_1,u,v_4) \) (a contradiction since \( C \) contains arc from \( A_{par} \)), or \( G \) contains a \mytype{itm:tetra8}, \mytype{itm:tetra9}, or \mytype{itm:penta8}, cycle. 
By contradiction, we may assume that \( v_4\neq v_5 \). 
Owing to \Cref{clm:no parent without link arc}, \( v_2 \) is not in \( C \). 

Let us focus on the case when \( v_5 \) is in \( C \). 
Consider the arc \( e' \) to \( v_5 \) in \( C \). 
It cannot be from a child of \( v_1 \) (i.e., a cousin of \( v_5 \)) due to \Cref{clm:no parent without link arc}. 
Hence, \( e' \) is from \( v_4 \) or a child/grandchild of \( v_4 \). 
If \( e'=(v_4,v_5) \), then either \( C=(v_1,u,v_4,v_5) \) (a contradiction since \( C \) contains arc from \( A_{par} \)), or \( G \) contains a cycle from \mytype{itm:penta9} to \mytype{itm:hexa9}. 
If \( e' \) is from a child of \( v_4 \), then \( G \) contains a cycle from \mytype{itm:penta11} to \mytype{itm:hepta5}. 
If \( e' \) is from a grandchild of \( v_4 \), then \( G \) contains a cycle from \mytype{itm:hexa12} to \mytype{itm:octa2}. 

It remains to consider the case when \( v_5 \) is not in \( C \). 
In particular, scenario (d) is ruled out. 
Let us call the sibling of \( v_7 \) as \( v_8 \). 
Since \( (v_7,v_1) \) is an arc in \( C \) from \( S_{v_5} \) to \( V(H)\setminus S_{v_5} \), there is an arc \( e'' \) in \( C \) from \( V(H)\setminus S_{v_5} \) to \( S_{v_5} \). 
The arc \( e'' \) cannot be from \( v_4 \) to a child \( v_9 \) of \( v_7 \) (otherwise, the \( v_7,v_9 \)-path in \( C \) can be replaced by the arc \( (v_7,v_9) \) to obtain a shorter cycle containing arc from \( A_{par} \)). 
Hence, the options are as follows, \( e'' \) is:  from \( v_4 \) to \( v_7 \); from a child of \( v_4 \) to \( v_7 \); from a child of \( v_4 \) to \( v_8 \); from \( v_4 \) to \( v_8 \); or from \( v_4 \) to a child of \( v_8 \). 
If \( e''=(v_4,v_7) \), then either \( C=(v_1,u,v_4,v_7) \) (a contradiction since \( C \) contains arc from \( A_{par} \)), or \( G \) contains a \mytype{itm:penta12} cycle. 
If \( e'' \) is from a child of \( v_4 \) to \( v_7 \), then \( G \) contains a \mytype{itm:penta13} or a \mytype{itm:hexa13} cycle. 

Consider the subcase when \( e'' \) is from a child \( v_9 \) of \( v_4 \) to \( v_8 \). 
Then, \( (v_4,v_9) \) is in \( C \) by \Cref{clm:no parent without link arc}. 
Also, the arc to \( v_7 \) in \( C \) cannot be from \( v_9 \) or its sibling, and thus it must be from \( v_8 \) or a child/grandchild of \( v_8 \). 
Therefore, \( G \) contains a cycle from \mytype{itm:hexa14} to \mytype{itm:nona1}, thereby ruling out the subcase. 

Next, consider the other two subcases: i.e., \( e'' \) is from \( v_4 \) to \( v_8 \) or to a child \( v_9 \) of \( v_8 \). 
Due to \Cref{clm:no parent without link arc}, no child of \( v_4 \) is in \( C \). 
Hence, the arc \( e''' \) to \( v_7 \) in \( C \) must be from \( v_8 \) or a child/grandchild of \( v_8 \). 
As a result, if \( e''=(v_4,v_8) \), then either \( C=(v_1,u,v_4,v_8,v_7) \) (a contradiction since \( C \) contains arc from \( A_{par} \)), or 
\( G \) contains a cycle from \mytype{itm:hexa15} to \mytype{itm:octa5}. 
The remaining subcase is \( e''=(v_4,v_9) \) (where \( v_9 \) is a child of \( v_8 \)). 
Let us call the sibling of \( v_9 \) as \( v_{10} \). 
Here, \( e'''\neq (v_8,v_7) \) (otherwise, the \( v_8,v_9 \)-path in \( C \) can be replaced by arc \( (v_8,v_9) \) to obtain a shorter cycle with arc from \( A_{par} \)), and thus  \( e''' \) is \( (v_9,v_7) \), \( (v_{10},v_7) \), \( (v_{11},v_7) \) or \( (v_{12},v_7) \), where \( v_{11} \) is a child of \( v_9 \), and \( v_{12} \) is a child of \( v_{10} \). 
If \( e''' \) is \( (v_9,v_7) \) or \( (v_{11},v_7) \), then either \( C=(v_1,u,v_4,v_9,v_7) \) (a contradiction since \( C \) contains arc from \( A_{par} \)), or \( G \) contains a \mytype{itm:hexa17}, \mytype{itm:hexa18}, or \mytype{itm:hepta12} cycle. 
Therefore, \( e''' \) is either \( (v_{10},v_7) \) or \( (v_{12},v_7) \). 
Either way, no child of \( v_7 \) is in \( C \). 
Hence, if \( v_{10} \) is in \( C \), then the arc to \( v_{10} \) in \( C \) must be from \( v_9 \) or a child/grandchild of \( v_9 \). 
Hence, if \( v_{10} \) is in \( C \), then either \( C=(v_1,u,v_4,v_9,v_{10},v_7) \) (a contradiction since \( C \) contains arc from \( A_{par} \)), or 
\( G \) contains a cycle from \mytype{itm:hepta13} to \mytype{itm:deca1}. 
Suppose that \( v_{10} \) is not in \( C \). 
Then, \( e'''=(v_{12},v_7) \). 
Let us call the sibling of \( v_{12} \) as \( v_{13} \). 
Since \( (v_{12},v_7) \) is an arc in \( C \) from \( S_{v_{10}} \) to \( V(H)\setminus S_{v_{10}} \), there is an arc in \( C \) from \( V(H)\setminus S_{v_{10}} \) to \( S_{v_{10}} \). 
Let \( e^{iv} \) be the first arc in \( C \) from \( V(H)\setminus S_{v_{10}} \) to \( S_{v_{10}} \) when \( C \) is traversed starting from \( v_{12} \). 
Clearly, \( e^{iv} \) is \( (v_9,v_{12}) \), from a child \( v_{11} \) of \( v_9 \) to \( v_{12} \), arc \( (v_9,v_{13}) \), or from a child \( v_{11} \) of \( v_9 \) to \( v_{13} \). 
If \( e^{iv}\in \{(v_9,v_{12}), (v_{11},v_{12})\} \), then either \( C=(v_1,u,v_4,v_9,v_{12},v_7) \) (a contradiction since \( C \) contains arc from \( A_{par} \)), or \( G \) contains a \mytype{itm:hepta16}, \mytype{itm:hepta17}, or \mytype{itm:octa10} cycle. 
Thus, \( e^{iv}\in \{(v_9,v_{13}), (v_{11},v_{13})\} \). 
Either way, the arc to \( v_{12} \) in \( C \) must be from \( v_{13} \) or a child/grandchild of \( v_{13} \). 
Thus, either \( C=(v_1,u,v_4,v_9,v_{13},v_{12},v_7) \) (a contradiction since \( C \) contains arc from \( A_{par} \)), or 
\( G \) contains a cycle from \mytype{itm:octa11} to \mytype{itm:undeca1}. 
By contradiction, \Cref{clm:v1 not in C} is proved. 
~\\

Since \( v_1 \) is not in \( C \), we have \( w\neq v_1 \) and in particular, \( P=v,v_1,w \). 
To recap, \( u \) and \( v \) are siblings, \( v_1 \) and \( v_2 \) are the children of \( v \), node \( w \) is a child of \( v_1 \), arc \( (w,u) \) is in \( C \), vertices \( v,v_1 \), cousins of \( u \) (if any) and the two children of \( w \) are not in \( C \), and no proper ancestor of \( u \) is in \( C \) (by the definition the \( L \) and \( u \)). 
Let \( v_3 \) denote the sibling of \( w \). 
Arc \( (u,w) \) is not in \( C \) since \( C \) is not the 2-cycle \( (u,w) \). 

Since \( (w,u) \) is an arc in \( C \) from \( S_v \) to \( V(H)\setminus S_v \), there is an arc in \( C \) from \( V(H)\setminus S_v \) to \( S_v \). 
Let \( e \) be the last arc from \( V(H)\setminus S_v \) to \( S_v \) when \( C \) is traversed starting from \( u \). 
Clearly, \( e \) has one of the following forms: (i)~\( (u,v_3) \), (ii)~\( (u,v_2) \), (iii)~from a child \( u_1 \) of \( u \) to \( v_2 \) (in this case, \( (u,u_1) \) is an arc in \( C \) by \Cref{clm:no parent without link arc}), (iv)~from \( u \) to a child \( v_4 \) of \( v_2 \).  
In Case~(i) (i.e., \( e=(u,v_3) \)), neither \( v_2 \) nor its children are in \( C \) by the choice of \( e \). 
Hence, in Case~(i), the arc to \( w \) in \( C \) is from \( v_3 \) or from a child/grandchild of \( v_3 \), and thus either \( C=(w,u,v_3) \) (a contradiction since \( C \) contains arc from \( A_{par} \)), or \( G \) contains a \mytype{itm:penta14} or \mytype{itm:hexa19} cycle in \( G \).

Only cases (ii), (iii) and (iv) remain. 
In these cases, \( (u,v_3) \) is not an arc in \( C \). 
Since \( (w,u) \) is an arc in \( C \) from \( S_{v_1} \) to \( V(H)\setminus S_{v_1} \), there is an arc \( e' \) in \( C \) from \( V(H)\setminus S_{v_1} \) to \( S_{v_1} \). 
Since \( (u,v_3) \) is not in \( C \), the arc \( e' \) is (a)~\( (v_2,w) \), (b)~\( (v_5,w) \), (c)~\( (v_2,v_3) \), or \( (v_5,v_3) \), where \( v_5 \) is a child of \( V_2 \). 
Note that \( v_4=v_5 \) and \( v_4\neq v_5 \) are both plausible. 
We consider Case~(iv) later (except for one subcase). 
With the combination of the cases (ii) and (a), written in short as (ii)+(a), \( C=(w,u,v_2) \) (a contradiction since \( C \) contains arc from \( A_{par} \)). 
In (iii)+(a), \( (u,u_1,v_2,w) \) is a \mytype{itm:tetra10} cycle in \( G \). 
In (ii)+(b), \( (w,u,v_2,v_5) \) is a \mytype{itm:tetra11} cycle in \( G \). 
In (iii)+(b), \( (w,u,u_1,v_2,v_5) \) is a \mytype{itm:penta15} cycle in \( G \). 
In the case combinations (ii)+(c), (iii)+(c), (ii)+(d), (iii)+(d) and `(iv)+(d) with \( v_4=v_5 \)', there is a \( w,v_3 \)-path in \( C \) (namely, \( w,u,v_2,v_3 \); \( w,u,u_1,v_2,v_3 \); \( w,u,v_2,v_5,v_3 \); \( w,u,u_1,v_2,v_5,v_3 \); or \( w,u,v_4,v_3 \)), and the arc to \( w \) in \( C \) must be from \( v_3 \) or from a child/grandchild of \( v_3 \). 
In those case combinations, depending on whether the arc to \( w \) in \( C \) is from \( v_3 \), a child of \( v_3 \), or a grandchild of \( v_3 \), either \( C=(w,u,v_2,v_5) \) (a contradiction since \( C \) contains arc from \( A_{par} \)), or 
\( G \) contains a cycle from \mytype{itm:penta16} to \mytype{itm:octa14}. 
This rules out (ii) and (iii). 
In (iv)+(a), the \( v_2,v_4 \)-path in \( C \) can be replaced by the arc \( (v_2,v_4) \) to obtain a shorter cycle containing arc from \( A_{par} \). 
The case combination (iv)+(c) is ruled out by \Cref{clm:no parent without link arc}. 
In the case combination `(iv)+(b) with \( v_4=v_5 \)', the cycle \( C=(w,u,v_4) \); a contradiction since \( C \) contains arc from \( A_{par} \). 
Hence, the remaining subcases are `(iv)+(b) with \( v_4\neq v_5 \)' and `(iv)+(d) with \( v_4\neq v_5 \)'. 

Consider (iv)+(b) with \( v_4\neq v_5 \). 
In this subcase, the arc to \( v_5 \) in \( C \) cannot be from \( v_2,v_1,u \) or \( w \), and thus it must be from \( v_4 \), a child/grandchild of \( v_4 \), or from \( v_3 \). 
If the arc to \( v_5 \) in \( C \) is from \( v_4 \) or a child/grandchild of \( v_4 \), then either \( C=(w,u,v_4,v_5) \) (a contradiction since \( C \) contains arc from \( A_{par} \)), or \( G \) contains a \mytype{itm:penta19} or a \mytype{itm:hexa24} cycle. 
Suppose that the arc to \( v_5 \) in \( C \) is from \( v_3 \). 
If the arc to \( v_3 \) in \( C \) is from a child/grandchild \( x \) of \( w \), then the \( w,x \)-path in \( C \) can be replaced by the \( w,x \)-path in \( H \) to obtain a shorter cycle containing arc from \( A_{par} \). 
Hence, the arc to \( v_3 \) in \( C \) cannot be from \( v_1,w,v_2,u \) or a child/grandchild of \( w \), and thus it must be from \( v_4 \). 
Hence, \( C=(w,u,v_4,v_3,v_5) \), a contradiction since \( C \) contains arc from \( A_{par} \). 

Consider (iv)+(d) with \( v_4\neq v_5 \). 
In this subcase, the arc to \( w \) in \( C \) cannot be from \( u,v_2,v_4 \) or \( v_5 \), and thus it must be from \( v_3 \) or a child/grandchild of \( v_3 \). 
Also, the arc to \( v_5 \) in \( C \) cannot be from \( v_2,v_1,u,v_3 \) or \( w \), and thus it must be from \( v_4 \) or a child/grandchild of \( v_4 \). 
As a result, either \( C=(w,u,v_4,v_5,v_3) \) (a contradiction since \( C \) contains arc from \( A_{par} \)), or 
\( G \) contains a cycle from \mytype{itm:hexa25} to \mytype{itm:nona9}. 

 Since all cases and subcases lead to contradictions, the proof is complete. 
 \end{proof}

\section{Open Problems}\label{sec:open}
The problem distance-bounded reconciliation\( (2) \) is polynomial-time solvable. 
In contrast, the complexity of distance-bounded reconciliation\( (d) \) is open for \( d\geq 3 \). 
\Cref{thm:no long without short dist3n4} proved that for \( d\in \{3,4\} \), forbidding a finite number of short cycles suffices to ensure strong acyclicity of a reconciliation, thereby making it possible to check for cycles locally.  
We conjecture that this holds for \( d\geq 5 \) as well. 
If this conjecture is true, an interesting direction to pursue is whether the process of building a finite list of short cycles (such a list need not be unique) and producing a proof that forbidding them suffices to ensure strong acyclicity can both be automated. 


\subsubsection*{Acknowledgements}
A.M.\ has been partially supported by the project “NextGRAAL: Next-generation algorithms for constrained GRAph visuALization” funded by MUR Progetti di Ricerca di Rilevante Interesse Nazionale (PRIN) Bando 2022 - Grant ID 2022ME9Z78. B.S.\ has been partially supported by the project “EXPAND: scalable algorithms for EXPloratory Analyses of heterogeneous and dynamic Networked Data”, funded by MUR Progetti di Ricerca di Rilevante Interesse Nazionale (PRIN) Bando 2022 - Grant ID 2022TS4Y3N.
C.A.\ thanks Sandhya T.P.\ for useful discussions.

\subsubsection*{Disclosure of Interests}
The authors have no competing interests to declare that are relevant to the content of this article. 
%
%
%
\bibliographystyle{splncs04}
\bibliography{references}

@ARTICLE{Tofigh11, 
author={Ali Tofigh and Michael Hallett and Jens Lagergren}, 
journal={IEEE/ACM Transactions on Computational Biology and Bioinformatics}, 
title={Simultaneous Identification of Duplications and Lateral Gene Transfers}, 
year={2011}, 
volume={8}, 
number={2}, 
pages={517-535}, 
doi={10.1109/TCBB.2010.14}, 
}

@article{Bansal12,
author = {Bansal, Mukul S. and Alm, Eric J. and Kellis, Manolis},
title = {Efficient algorithms for the reconciliation problem with gene duplication, horizontal transfer and loss},
journal = {Bioinformatics},
volume = {28},
number = {12},
pages = {i283-i291},
year = {2012},
doi = {10.1093/bioinformatics/bts225},
}

@article{stolzer12,
    author = {Stolzer, Maureen and Lai, Han and Xu, Minli and Sathaye, Deepa and Vernot, Benjamin and Durand, Dannie},
    title = "{Inferring duplications, losses, transfers and incomplete lineage sorting with nonbinary species trees}",
    journal = {Bioinformatics},
    volume = {28},
    number = {18},
    pages = {i409-i415},
    year = {2012},
    month = {09},
    issn = {1367-4803},
    doi = {10.1093/bioinformatics/bts386},
}

@article{capybara,
  author    = {Yishu Wang and
               Arnaud Mary and
               Marie{-}France Sagot and
               Blerina Sinaimeri},
  title     = {Capybara: equivalence ClAss enumeration of coPhylogenY event-BAsed
               ReconciliAtions},
  journal   = {Bioinformatics},
  volume    = {36},
  number    = {14},
  pages     = {4197--4199},
  year      = {2020},
  doi = {10.1093/bioinformatics/btaa498}
 }

@article{Donati2015,
  author    = {Beatrice Donati and
               Christian Baudet and
               Blerina Sinaimeri and
               Pierluigi  Crescenzi and
               Marie{-}France Sagot},
  title     = {{\sc Eucalypt:} Efficient tree reconciliation enumerator},
  journal   = {Algorithms for Molecular Biology},
  volume    = {10},
  number={1},
  pages     = {3},
  year      = {2015},
  doi = {10.1186/s13015-014-0031-3}
}

@Article{Charleston2003,
  author =	 {Michael A. Charleston},
  title =	 {Recent results in cophylogeny mapping},
  journal =	 {Advances in Parasitology},
  year =	 2003,
  volume =	 54,
  pages =	 {303--330},
  month =	 {December},
  doi= {10.1016/s0065-308x(03)54007-6} ,
}

@Article{MM2005,
  author =	 {Daniel Merkle and Martin Middendorf},
  title =	 {Reconstruction of the cophylogenetic history of
                  related phylogenetic trees with divergence timing
                  information},
  journal =	 {Theory in Biosciences},
  year =	 2005,
  volume =	 123,
  pages =	 {277--299},
  doi = {10.1016/j.thbio.2005.01.003},
}

@Article{OFCLH2011,
author = {Ovadia, Yaniv J. and Fielder, Daniel and Conow, Chris and Libeskind-Hadas, Ran},
title = {The Cophylogeny Reconstruction Problem Is NP-Complete},
journal = {Journal of Computational Biology},
volume = {18},
number = {1},
pages = {59-65},
year = {2011},
doi = {10.1089/cmb.2009.0240},
}

@article{Jacox2016,
author = {Jacox, Edwin and Chauve, Cedric and Sz\"{o}ll\H{o}si, Gergely J. and Ponty, Yann and Scornavacca, Celine}, 
title = {{ecceTERA}: Comprehensive gene tree-species tree reconciliation using parsimony},
year = {2016}, 
journal = {Bioinformatics} ,
doi={10.1093/bioinformatics/btw105},
}

@Article{Bansal2013,
    author={Bansal, Mukul S.
    and Alm, Eric J.
    and Kellis, Manolis},
    title={Reconciliation revisited: handling multiple optima when reconciling with duplication, transfer, and loss},
    journal={Journal of computational biology : a journal of computational molecular cell biology},
    year={2013},
    month={Oct},
    edition={2013/09/14},
    publisher={Mary Ann Liebert, Inc.},
    volume={20},
    number={10},
    pages={738-754},
    issn={1557-8666},
    doi={10.1089/cmb.2013.0073},
}

@article{Etherington2006,
   author = "Etherington, Graham J. and Ring, Susan M. and Charleston, Michael A. and Dicks, Jo and Rayward-Smith, Vic J. and Roberts, Ian N.",
   title = "Tracing the origin and co-phylogeny of the caliciviruses", 
   journal= "Journal of General Virology",
   year = "2006",
   volume = "87",
   number = "5",
   pages = "1229-1235",
   publisher = "Microbiology Society",
   doi = {10.1099/vir.0.81635-0},
  }

@article{Lei2014,
    author = {Lei, Bonnie R. AND Olival, Kevin J.},
    journal = {PLOS Neglected Tropical Diseases},
    publisher = {Public Library of Science},
    title = {Contrasting Patterns in Mammal–Bacteria Coevolution: Bartonella and Leptospira in Bats and Rodents},
    year = {2014},
    month = mar,
    volume = {8},
    pages = {1-11},
    number = {3},
    doi = {10.1371/journal.pntd.0002738},
}

@article{PENNINGTON2015,
title = {The Chagas disease domestic transmission cycle in Guatemala: Parasite-vector switches and lack of mitochondrial co-diversification between Triatoma dimidiata and Trypanosoma cruzi subpopulations suggest non-vectorial parasite dispersal across the Motagua valley},
journal = {Acta Tropica},
volume = {151},
pages = {80-87},
year = {2015},
note = {Ecology and diversity of Trypanosoma cruzi},
author = {Pamela M. Pennington and Louisa Alexandra Messenger and Jeffrey Reina and José G. Juárez and Gena G. Lawrence and Ellen M. Dotson and Martin S. Llewellyn and Celia Cordón-Rosales},
doi = {10.1016/j.actatropica.2015.07.014},
}

@Article{CFOLH2010,
  author =	 {C. Conow and D. Fielder and Y. Ovadia and
                  R. Libeskind-Hadas},
  title =	 {Jane: A new tool for the cophylogeny reconstruction problem},
  journal =	 {Algorithms for Molecular Biology},
  year =	 2010,
  volume =	 5,
  number =	 16,
  month =	 {February},
}

@ARTICLE{Tavernelli22,
  author={Tavernelli, Daniele and Calamoneri, Tiziana and Vocca, Paola},
  journal={IEEE/ACM Transactions on Computational Biology and Bioinformatics}, 
  title={Linear Time Reconciliation With Bounded Transfers of Genes}, 
  year={2022},
  volume={19},
  number={2},
  pages={1009-1017},
  doi={10.1109/TCBB.2020.3027207}}

@article {DeVienne2007,
  author =	 {D. M. De Vienne and T. Giraud and J. A. Skyhoff},
  title =	 {{When can host shifts produce congruent host and
                  parasite phylogenies? A simulation approach}},
  journal =	 {Journal of Evolutionary Biology},
  volume =	 20,
  number =	 4,
  pages =	 {1428--1438},
  year =	 2007,
}

@article{Poulin2003,
  author =	 {R. Poulin and D. Mouillot},
  title =	 {{Parasite specialization from a phylogenetic
                  perspective: a new index of host specificity}},
  journal =	 {Parasitology},
  volume =	 126,
  year =	 2003,
  pages =	 {473--480},
  issue =	 5,
}

@article{Calamoneri2019,
  title = {Co-divergence and tree topology},
  volume = {79},
  ISSN = {1432-1416},
  DOI = {10.1007/s00285-019-01385-w},
  number = {3},
  journal = {Journal of Mathematical Biology},
  publisher = {Springer Science and Business Media LLC},
  author = {Calamoneri,  T. and Monti,  A. and Sinaimeri,  B.},
  year = {2019},
  month = jun,
  pages = {1149–1167}
}
\appendix
\section{The list of short cycles in \Cref{lem:no long without short dist4}}\label{sec:short cycle list in words}
\begin{enumerate}[\textnormal{Type} 1.]
\item \( (h_j,h_k,h_{k_1}) \), where \( h_j \) and \( h_k \) are siblings, and \( h_{k_1} \) is a child of \( h_k \); \label{itm:tri1}
\item \( (h_j,h_{j_1},h_k,h_{k_1}) \), where \( h_j \) and \( h_k \) are siblings, \( h_{j_1} \) is a child of \( h_j \), and \( h_{k_1} \) is a child of \( h_k \); \label{itm:tetra1} 
\item \( (h_j,h_{k_1},h_{k_3},h_{k_2}) \), where \( h_j \) and \( h_k \) are siblings, \( h_{k_1} \) and \( h_{k_2} \) are the children of \( h_k \), and \( h_{k_3} \) is a child of \( h_{k_1} \); \label{itm:tetra2} 
\item \( (h_j,h_k,h_{k_1},h_{k_3}) \), where \( h_j \) and \( h_k \) are siblings, \( h_{k_1} \) is a child of \( h_k \), and \( h_{k_3} \) is a child of \( h_{k_1} \); \label{itm:tetra3} 
\item \( (h_j,h_{j_1},h_k,h_{k_1},h_{k_3}) \), where \( h_j \) and \( h_k \) are siblings, \( h_{j_1} \) is a child of \( h_j \), node \( h_{k_1} \) is a child of \( h_k \), and \( h_{k_3} \) is a child of \( h_{k_1} \); \label{itm:penta1}
\item \( (h_j,h_{j_1},h_{j_3},h_k,h_{k_1},h_{k_3}) \), where \( h_j \) and \( h_k \) are siblings, \( h_{j_1} \) is a child of \( h_j \), node \( h_{j_3} \) is a child of \( h_{j_1} \), node \( h_{k_1} \) is a child of \( h_k \), and \( h_{k_3} \) is a child of \( h_{k_1} \); \label{itm:hexa1}
\item \( (h_j,h_\ell,h_k,h_{k_1}) \), where \( h_j \) and \( h_k \) are siblings, \( h_{k_1} \) is a child of \( h_k \), and \( h_\ell \) is a cousin of \( h_ j \) and \( h_k \); \label{itm:tetra4}
\item \( (h_j,h_\ell,h_k,h_{k_1},h_{k_3}) \), where \( h_j \) and \( h_k \) are siblings, \( h_{k_1} \) is a child of \( h_k \), node \( h_{k_3} \) is a child of \( h_{k_1} \), and \( h_\ell \) is a cousin of \( h_ j \) and \( h_k \); \label{itm:penta1.5}
\item \( (h_j,h_\ell,h_m,h_k,h_{k_1}) \), where \( h_j \) and \( h_k \) are siblings, \( h_{k_1} \) is a child of \( h_k \), node \( h_\ell \) is a cousin of \( h_ j \) and \( h_k \), and \( h_m \) is the sibling of \( h_\ell \); \label{itm:penta2}
\item \( (h_j,h_\ell,h_m,h_k,h_{k_1},h_{k_3}) \), where \( h_j \) and \( h_k \) are siblings, \( h_{k_1} \) is a child of \( h_k \), node \( h_{k_3} \) is a child of \( h_{k_1} \), node \( h_\ell \) is a cousin of \( h_ j \) and \( h_k \), and \( h_m \) is the sibling of \( h_\ell \); \label{itm:hexa2}
\item \( (h_j,h_\ell,h_{\ell_1},h_m,h_k,h_{k_1}) \), where \( h_j \) and \( h_k \) are siblings, \( h_{k_1} \) is a child of \( h_k \), node \( h_\ell \) is a cousin of \( h_j \) and \( h_k \), node \( h_{\ell_1} \) is a child of \( h_\ell \), and \( h_m \) is the sibling of \( h_\ell \); \label{itm:hexa3}
\item \( (h_j,h_\ell,h_{\ell_1},h_m,h_k,h_{k_1},h_{k_3}) \), where \( h_j \) and \( h_k \) are siblings, \( h_{k_1} \) is a child of \( h_k \), node \( h_{k_3} \) is a child of \( h_{k_1} \), node \( h_\ell \) is a cousin of \( h_ j \) and \( h_k \), node \( h_{\ell_1} \) is a child of \( h_\ell \), and \( h_m \) is the sibling of \( h_\ell \); \label{itm:hepta1}
\item \( (h_j,h_\ell,h_{\ell_1},h_{\ell_3},h_m,h_k,h_{k_1}) \), where \( h_j \) and \( h_k \) are siblings, \( h_{k_1} \) is a child of \( h_k \), node \( h_\ell \) is a cousin of \( h_ j \) and \( h_k \), node \( h_{\ell_1} \) is a child of \( h_\ell \), node \( h_{\ell_3} \) is a child of \( h_{\ell_1} \), and \( h_m \) is the sibling of \( h_\ell \); \label{itm:hepta2}
\item \( (h_j,h_\ell,h_{\ell_1},h_{\ell_3},h_m,h_k,h_{k_1},h_{k_3}) \), where \( h_j \) and \( h_k \) are siblings, \( h_{k_1} \) is a child of \( h_k \), node \( h_{k_3} \) is a child of \( h_{k_1} \), node \( h_\ell \) is a cousin of \( h_ j \) and \( h_k \), node \( h_{\ell_1} \) is a child of \( h_\ell \), node \( h_{\ell_3} \) is a child of \( h_{\ell_1} \), and \( h_m \) is the sibling of \( h_\ell \); \label{itm:octa1}
\item \( (h_j,h_{k_1},h_{k_3}) \), where \( h_j \) and \( h_k \) are siblings, \( h_{k_1} \) is a child of \( h_k \), and \( h_{k_3} \) is a child of \( h_{k_3} \); \label{itm:tri2}
\item \( (h_j,h_{j_1},h_{k_1}) \), where \( h_j \) and \( h_k \) are siblings, \( h_{j_1} \) is a child of \( h_j \), and \( h_{k_1} \) is a child of \( h_k \); \label{itm:tri3}
\item \( (h_j,h_{j_1},h_{k_1},h_{k_3}) \), where \( h_j \) and \( h_k \) are siblings, \( h_{j_1} \) is a child of \( h_j \), node \( h_{k_1} \) is a child of \( h_k \), and \( h_{k_3} \) is a child of \( h_{k_1} \); \label{itm:tetra5}
\item \( (h_j,h_{k_1},h_{k_2}) \), where \( h_j \) and \( h_k \) are siblings, and the children of \( h_k \) are \( h_{k_1} \) and \( h_{k_2} \); \label{itm:tri4}
\item \( (h_j,h_{k_1},h_{k_2},h_{k_3}) \), where \( h_j \) and \( h_k \) are siblings, \( h_{k_1} \) and \( h_{k_2} \) are the children of \( h_k \), and \( h_{k_3} \) is a child of \( h_{k_2} \); \label{itm:tetra7}
\item \( (h_j,h_{j_1},h_{k_1},h_{k_2}) \), where \( h_j \) and \( h_k \) are siblings, \( h_{j_1} \) is a child of \( h_j \), and the children of \( h_k \) are \( h_{k_1} \) and \( h_{k_2} \); \label{itm:tetra6}
\item \( (h_j,h_{j_1},h_{k_1},h_{k_2},h_{k_3}) \), where \( h_j \) and \( h_k \) are siblings, \( h_{j_1} \) is a child of \( h_j \), nodes \( h_{k_1} \) and \( h_{k_2} \) are the children of \( h_k \), and \( h_{k_3} \) is a child of \( h_{k_2} \); \label{itm:penta5}
\item \( (h_j,h_{k_1},h_{k_3},h_{k_5},h_{k_2}) \), where \( h_j \) and \( h_k \) are siblings, \( h_{k_1} \) and \( h_{k_2} \) are the children of \( h_k \), node \( h_{k_3} \) is a child of \( h_{k_1} \), and \( h_{k_5} \) is a child of \( h_{k_3} \); \label{itm:penta6}
\item \( (h_j,h_{k_1},h_{k_3},h_{k_2},h_{k_4}) \), where \( h_j \) and \( h_k \) are siblings, \( h_{k_1} \) and \( h_{k_2} \) are the children of \( h_k \), node \( h_{k_3} \) is a child of \( h_{k_1} \), and \( h_{k_4} \) is a child of \( h_{k_2} \); \label{itm:penta4}
\item \( (h_j,h_{k_1},h_{k_3},h_{k_5},h_{k_2},h_{k_4}) \), where \( h_j \) and \( h_k \) are siblings, \( h_{k_1} \) and \( h_{k_2} \) are the children of \( h_k \), node \( h_{k_3} \) is a child of \( h_{k_1} \), node \( h_{k_5} \) is a child of \( h_{k_3} \), and \( h_{k_4} \) is a child of \( h_{k_2} \); 
\label{itm:hexa7}
\item \( (h_j,h_{j_1},h_{k_1},h_{k_3},h_{k_2}) \), where \( h_j \) and \( h_k \) are siblings, \( h_{j_1} \) is a child of \( h_j \), the children of \( h_k \) are \( h_{k_1} \) and \( h_{k_2} \), and \( h_{k_3} \) is a child of \( h_{k_1} \); \label{itm:penta3}
\item \( (h_j,h_{j_1},h_{k_1},h_{k_3},h_{k_5},h_{k_2}) \), where \( h_j \) and \( h_k \) are siblings, \( h_{j_1} \) is a child of \( h_j \), the children of \( h_k \) are \( h_{k_1} \) and \( h_{k_2} \), node \( h_{k_3} \) is a child of \( h_{k_1} \), and \( h_{k_5} \) is a child of \( h_{k_3} \);
\label{itm:hexa5}
\item \( (h_j,h_{j_1},h_{k_1},h_{k_3},h_{k_2},h_{k_4}) \), where \( h_j \) and \( h_k \) are siblings, \( h_{j_1} \) is a child of \( h_j \), nodes \( h_{k_1} \) and \( h_{k_2} \) are the children of \( h_k \), \( h_{k_3} \) is a child of \( h_{k_1} \), and \( h_{k_4} \) is a child of \( h_{k_2} \); \label{itm:hexa4}
\label{itm:hexa8}
\item \( (h_j,h_{j_1},h_{k_1},h_{k_3},h_{k_5},h_{k_2},h_{k_4}) \), where \( h_j \) and \( h_k \) are siblings, \( h_{j_1} \) is a child of \( h_j \), nodes \( h_{k_1} \) and \( h_{k_2} \) are the children of \( h_k \), node \( h_{k_3} \) is a child of \( h_{k_1} \), node \( h_{k_5} \) is a child of \( h_{k_3} \), and \( h_{k_4} \) is a child of \( h_{k_2} \); 
\label{itm:hepta3}
\label{itm:hepta4}
\item \( (h_j,h_{k_4},h_{k_1},h_{k_3}) \), where \( h_j \) and \( h_k \) are siblings, \( h_{k_1} \) and \( h_{k_2} \) are the children of \( h_k \), node \( h_{k_3} \) is a child of \( h_{k_1} \), and \( h_{k_4} \) is a child of \( h_{k_2} \); 
\label{itm:tetra8}
\item \( (h_j,h_{k_4},h_{k_6},h_{k_1}) \), where \( h_j \) and \( h_k \) are siblings, \( h_{k_1} \) and \( h_{k_2} \) are the children of \( h_k \), node \( h_{k_4} \) is a child of \( h_{k_2} \), and \( h_{k_6} \) is a child of \( h_{k_4} \); 
\label{itm:tetra9}
\item \( (h_j,h_{k_4},h_{k_6},h_{k_1},h_{k_3}) \), where \( h_j \) and \( h_k \) are siblings, \( h_{k_1} \) and \( h_{k_2} \) are the children of \( h_k \), node \( h_{k_3} \) is a child of \( h_{k_1} \), node \( h_{k_4} \) is a child of \( h_{k_2} \) and \( h_{k_6} \) is a child of \( h_{k_4} \); 
\label{itm:penta8}
\item \( (h_j,h_{k_5},h_{k_4},h_{k_1},h_{k_3}) \), where \( h_j \) and \( h_k \) are siblings, \( h_{k_1} \) and \( h_{k_2} \) are the children of \( h_k \), nodes \( h_{k_4} \) and \( h_{k_5} \) are the children of \( h_{k_2} \), and \( h_{k_3} \) is a child of \( h_{k_1} \);
\label{itm:penta9}
\item \( (h_j,h_{k_5},h_{k_4},h_{k_6},h_{k_1}) \), where \( h_j \) and \( h_k \) are siblings, \( h_{k_1} \) and \( h_{k_2} \) are the children of \( h_k \), nodes \( h_{k_4} \) and \( h_{k_5} \) are the children of \( h_{k_2} \), and \( h_{k_6} \) is a child of \( h_{k_4} \);
\label{itm:penta10}
\item \( (h_j,h_{k_5},h_{k_4},h_{k_6},h_{k_1},h_{k_3}) \), where \( h_j \) and \( h_k \) are siblings, \( h_{k_1} \) and \( h_{k_2} \) are the children of \( h_k \), nodes \( h_{k_4} \) and \( h_{k_5} \) are the children of \( h_{k_2} \), node \( h_{k_3} \) is a child of \( h_{k_1} \), and \( h_{k_6} \) is a child of \( h_{k_4} \);
\label{itm:hexa9}
\item \( (h_j,h_{k_5},h_{k_7},h_{k_4},h_{k_1}) \), where \( h_j \) and \( h_k \) are siblings, \( h_{k_1} \) and \( h_{k_2} \) are the children of \( h_k \), nodes \( h_{k_4} \) and \( h_{k_5} \) are the children of \( h_{k_2} \), and \( h_{k_7} \) is a child of \( h_{k_5} \);
\label{itm:penta11}
\item \( (h_j,h_{k_5},h_{k_7},h_{k_4},h_{k_6},h_{k_1}) \), where \( h_j \) and \( h_k \) are siblings, \( h_{k_1} \) and \( h_{k_2} \) are the children of \( h_k \), nodes \( h_{k_4} \) and \( h_{k_5} \) are the children of \( h_{k_2} \), node \( h_{k_6} \) is a child of \( h_{k_4} \), and \( h_{k_7} \) is a child of \( h_{k_5} \);
\label{itm:hexa10}
\item \( (h_j,h_{k_5},h_{k_7},h_{k_4},h_{k_1},h_{k_3}) \), where \( h_j \) and \( h_k \) are siblings, \( h_{k_1} \) and \( h_{k_2} \) are the children of \( h_k \), nodes \( h_{k_4} \) and \( h_{k_5} \) are the children of \( h_{k_2} \), node \( h_{k_3} \) is a child of \( h_{k_1} \), and \( h_{k_7} \) is a child of \( h_{k_5} \);
\label{itm:hexa11}
\item \( (h_j,h_{k_5},h_{k_7},h_{k_4},h_{k_6},h_{k_1},h_{k_3}) \), where \( h_j \) and \( h_k \) are siblings, \( h_{k_1} \) and \( h_{k_2} \) are the children of \( h_k \), nodes \( h_{k_4} \) and \( h_{k_5} \) are the children of \( h_{k_2} \), node \( h_{k_3} \) is a child of \( h_{k_1} \), node \( h_{k_6} \) is a child of \( h_{k_4} \), and \( h_{k_7} \) is a child of \( h_{k_5} \);
\label{itm:hepta5}
\item \( (h_j,h_{k_5},h_{k_7},h_{k_9},h_{k_4},h_{k_1}) \), where \( h_j \) and \( h_k \) are siblings, \( h_{k_1} \) and \( h_{k_2} \) are the children of \( h_k \), nodes \( h_{k_4} \) and \( h_{k_5} \) are the children of \( h_{k_2} \), node \( h_{k_7} \) is a child of \( h_{k_5} \), and \( h_{k_9} \) is a child of \( h_{k_7} \);
\label{itm:hexa12}
\item \( (h_j,h_{k_5},h_{k_7},h_{k_9},h_{k_4},h_{k_6},h_{k_1}) \), where \( h_j \) and \( h_k \) are siblings, \( h_{k_1} \) and \( h_{k_2} \) are the children of \( h_k \), nodes \( h_{k_4} \) and \( h_{k_5} \) are the children of \( h_{k_2} \), node \( h_{k_6} \) is a child of \( h_{k_4} \), node \( h_{k_7} \) is a child of \( h_{k_5} \), and \( h_{k_9} \) is a child of \( h_{k_7} \);
\label{itm:hepta6}
\item \( (h_j,h_{k_5},h_{k_7},h_{k_9},h_{k_4},h_{k_1},h_{k_3}) \), where \( h_j \) and \( h_k \) are siblings, \( h_{k_1} \) and \( h_{k_2} \) are the children of \( h_k \), nodes \( h_{k_4} \) and \( h_{k_5} \) are the children of \( h_{k_2} \), node \( h_{k_3} \) is a child of \( h_{k_1} \), node \( h_{k_7} \) is a child of \( h_{k_5} \), and \( h_{k_9} \) is a child of \( h_{k_7} \);
\label{itm:hepta7}
\item \( (h_j,h_{k_5},h_{k_7},h_{k_9},h_{k_4},h_{k_6},h_{k_1},h_{k_3}) \), where \( h_j \) and \( h_k \) are siblings, \( h_{k_1} \) and \( h_{k_2} \) are the children of \( h_k \), nodes \( h_{k_4} \) and \( h_{k_5} \) are the children of \( h_{k_2} \), node \( h_{k_3} \) is a child of \( h_{k_1} \), node \( h_{k_6} \) is a child of \( h_{k_4} \), node \( h_{k_7} \) is a child of \( h_{k_5} \), and \( h_{k_9} \) is a child of \( h_{k_7} \);
\label{itm:octa2}
\item \( (h_j,h_{k_5},h_{k_6},h_{k_1},h_{k_3}) \), where \( h_j \) and \( h_k \) are siblings, \( h_{k_1} \) and \( h_{k_2} \) are the children of \( h_k \), nodes \( h_{k_4} \) and \( h_{k_5} \) are the children of \( h_{k_2} \), node \( h_{k_3} \) is a child of \( h_{k_1} \), and \( h_{k_6} \) is a child of \( h_{k_4} \);
\label{itm:penta12}
\item \( (h_j,h_{k_5},h_{k_7},h_{k_6},h_{k_1}) \), where \( h_j \) and \( h_k \) are siblings, \( h_{k_1} \) and \( h_{k_2} \) are the children of \( h_k \), nodes \( h_{k_4} \) and \( h_{k_5} \) are the children of \( h_{k_2} \), node \( h_{k_6} \) is a child of \( h_{k_4} \), and \( h_{k_7} \) is a child of \( h_{k_5} \);
\label{itm:penta13}
\item \( (h_j,h_{k_5},h_{k_7},h_{k_6},h_{k_1},h_{k_3}) \), where \( h_j \) and \( h_k \) are siblings, \( h_{k_1} \) and \( h_{k_2} \) are the children of \( h_k \), nodes \( h_{k_4} \) and \( h_{k_5} \) are the children of \( h_{k_2} \), node \( h_{k_3} \) is a child of \( h_{k_1} \), node \( h_{k_6} \) is a child of \( h_{k_4} \), and \( h_{k_7} \) is a child of \( h_{k_5} \);
\label{itm:hexa13}
\item \( (h_j,h_{k_5},h_{k_7},h_{k_8},h_{k_6},h_{k_1}) \), where \( h_j \) and \( h_k \) are siblings, \( h_{k_1} \) and \( h_{k_2} \) are the children of \( h_k \), nodes \( h_{k_4} \) and \( h_{k_5} \) are the children of \( h_{k_2} \), nodes \( h_{k_6} \) and \( h_{k_8} \) are the children of \( h_{k_4} \), and \( h_{k_7} \) is a child of \( h_{k_5} \);
\label{itm:hexa14}
\item \( (h_j,h_{k_5},h_{k_7},h_{k_8},h_{k_9},h_{k_6},h_{k_1}) \), where \( h_j \) and \( h_k \) are siblings, \( h_{k_1} \) and \( h_{k_2} \) are the children of \( h_k \), nodes \( h_{k_4} \) and \( h_{k_5} \) are the children of \( h_{k_2} \), nodes \( h_{k_6} \) and \( h_{k_8} \) are the children of \( h_{k_4} \), node \( h_{k_7} \) is a child of \( h_{k_5} \), and \( h_{k_9} \) is a child of \( h_{k_8} \);
\label{itm:hepta8}
\item \( (h_j,h_{k_5},h_{k_7},h_{k_8},h_{k_9},h_{k_{10}},h_{k_6},h_{k_1}) \), where \( h_j \) and \( h_k \) are siblings, \( h_{k_1} \) and \( h_{k_2} \) are the children of \( h_k \), nodes \( h_{k_4} \) and \( h_{k_5} \) are the children of \( h_{k_2} \), nodes \( h_{k_6} \) and \( h_{k_8} \) are the children of \( h_{k_4} \), node \( h_{k_7} \) is a child of \( h_{k_5} \), node \( h_{k_9} \) is a child of \( h_{k_8} \), and \( h_{k_{10}} \) is a child of \( h_{k_9} \);
\label{itm:octa3}
\item \( (h_j,h_{k_5},h_{k_7},h_{k_8},h_{k_6},h_{k_1},h_{k_3}) \), where \( h_j \) and \( h_k \) are siblings, \( h_{k_1} \) and \( h_{k_2} \) are the children of \( h_k \), nodes \( h_{k_4} \) and \( h_{k_5} \) are the children of \( h_{k_2} \), node \( h_{k_3} \) is a child of \( h_{k_1} \), nodes \( h_{k_6} \) and \( h_{k_8} \) are the children of \( h_{k_4} \), and \( h_{k_7} \) is a child of \( h_{k_5} \);
\label{itm:hepta9}
\item \( (h_j,h_{k_5},h_{k_7},h_{k_8},h_{k_9},h_{k_6},h_{k_1},h_{k_3}) \), where \( h_j \) and \( h_k \) are siblings, \( h_{k_1} \) and \( h_{k_2} \) are the children of \( h_k \), nodes \( h_{k_4} \) and \( h_{k_5} \) are the children of \( h_{k_2} \), node \( h_{k_3} \) is a child of \( h_{k_1} \), nodes \( h_{k_6} \) and \( h_{k_8} \) are the children of \( h_{k_4} \), node \( h_{k_7} \) is a child of \( h_{k_5} \), and \( h_{k_9} \) is a child of \( h_{k_8} \);
\label{itm:octa4}
\item \( (h_j,h_{k_5},h_{k_7},h_{k_8},h_{k_9},h_{k_{10}},h_{k_6},h_{k_1},h_{k_3}) \), where \( h_j \) and \( h_k \) are siblings, \( h_{k_1} \) and \( h_{k_2} \) are the children of \( h_k \), nodes \( h_{k_4} \) and \( h_{k_5} \) are the children of \( h_{k_2} \), node \( h_{k_3} \) is a child of \( h_{k_1} \), nodes \( h_{k_6} \) and \( h_{k_8} \) are the children of \( h_{k_4} \), node \( h_{k_7} \) is a child of \( h_{k_5} \), node \( h_{k_9} \) is a child of \( h_{k_8} \), and \( h_{k_{10}} \) is a child of \( h_{k_9} \);
\label{itm:nona1}
\item \( (h_j,h_{k_5},h_{k_8},h_{k_9},h_{k_6},h_{k_1}) \), where \( h_j \) and \( h_k \) are siblings, \( h_{k_1} \) and \( h_{k_2} \) are the children of \( h_k \), nodes \( h_{k_4} \) and \( h_{k_5} \) are the children of \( h_{k_2} \), nodes \( h_{k_6} \) and \( h_{k_8} \) are the children of \( h_{k_4} \), and \( h_{k_9} \) is a child of \( h_{k_8} \);
\label{itm:hexa15}
\item \( (h_j,h_{k_5},h_{k_8},h_{k_9},h_{k_{10}},h_{k_6},h_{k_1}) \), where \( h_j \) and \( h_k \) are siblings, \( h_{k_1} \) and \( h_{k_2} \) are the children of \( h_k \), nodes \( h_{k_4} \) and \( h_{k_5} \) are the children of \( h_{k_2} \), nodes \( h_{k_6} \) and \( h_{k_8} \) are the children of \( h_{k_4} \), node \( h_{k_9} \) is a child of \( h_{k_8} \), and \( h_{k_{10}} \) is a child of \( h_{k_9} \);
\label{itm:hepta10}
\item \( (h_j,h_{k_5},h_{k_8},h_{k_6},h_{k_1},h_{k_3}) \), where \( h_j \) and \( h_k \) are siblings, \( h_{k_1} \) and \( h_{k_2} \) are the children of \( h_k \), nodes \( h_{k_4} \) and \( h_{k_5} \) are the children of \( h_{k_2} \), node \( h_{k_3} \) is a child of \( h_{k_1} \), and \( h_{k_6} \) and \( h_{k_8} \) are the children of \( h_{k_4} \);
\label{itm:hexa16}
\item \( (h_j,h_{k_5},h_{k_8},h_{k_9},h_{k_6},h_{k_1},h_{k_3}) \), where \( h_j \) and \( h_k \) are siblings, \( h_{k_1} \) and \( h_{k_2} \) are the children of \( h_k \), nodes \( h_{k_4} \) and \( h_{k_5} \) are the children of \( h_{k_2} \), node \( h_{k_3} \) is a child of \( h_{k_1} \), nodes \( h_{k_6} \) and \( h_{k_8} \) are the children of \( h_{k_4} \), and \( h_{k_9} \) is a child of \( h_{k_8} \);
\label{itm:hepta11}
\item \( (h_j,h_{k_5},h_{k_8},h_{k_9},h_{k_{10}},h_{k_6},h_{k_1},h_{k_3}) \), where \( h_j \) and \( h_k \) are siblings, \( h_{k_1} \) and \( h_{k_2} \) are the children of \( h_k \), nodes \( h_{k_4} \) and \( h_{k_5} \) are the children of \( h_{k_2} \), node \( h_{k_3} \) is a child of \( h_{k_1} \), nodes \( h_{k_6} \) and \( h_{k_8} \) are the children of \( h_{k_4} \), node \( h_{k_9} \) is a child of \( h_{k_8} \), and \( h_{k_{10}} \) is a child of \( h_{k_9} \);
\label{itm:octa5}
\item \( (h_j,h_{k_5},h_{k_9},h_{k_6},h_{k_1},h_{k_3}) \), where \( h_j \) and \( h_k \) are siblings, \( h_{k_1} \) and \( h_{k_2} \) are the children of \( h_k \), nodes \( h_{k_4} \) and \( h_{k_5} \) are the children of \( h_{k_2} \), node \( h_{k_3} \) is a child of \( h_{k_1} \), nodes \( h_{k_6} \) and \( h_{k_8} \) are the children of \( h_{k_4} \), and \( h_{k_9} \) is a child of \( h_{k_8} \);
\label{itm:hexa17}
\item \( (h_j,h_{k_5},h_{k_9},h_{k_{10}},h_{k_6},h_{k_1}) \), where \( h_j \) and \( h_k \) are siblings, \( h_{k_1} \) and \( h_{k_2} \) are the children of \( h_k \), nodes \( h_{k_4} \) and \( h_{k_5} \) are the children of \( h_{k_2} \), nodes \( h_{k_6} \) and \( h_{k_8} \) are the children of \( h_{k_4} \), node \( h_{k_9} \) is a child of \( h_{k_8} \), and \( h_{k_{10}} \) is a child of \( h_{k_9} \);
\label{itm:hexa18}
\item \( (h_j,h_{k_5},h_{k_9},h_{k_{10}},h_{k_6},h_{k_1},h_{k_3}) \), where \( h_j \) and \( h_k \) are siblings, \( h_{k_1} \) and \( h_{k_2} \) are the children of \( h_k \), nodes \( h_{k_4} \) and \( h_{k_5} \) are the children of \( h_{k_2} \), node \( h_{k_3} \) is a child of \( h_{k_1} \), nodes \( h_{k_6} \) and \( h_{k_8} \) are the children of \( h_{k_4} \), node \( h_{k_9} \) is a child of \( h_{k_8} \), and \( h_{k_{10}} \) is a child of \( h_{k_9} \);
\label{itm:hepta12}
\item \( (h_j,h_{k_5},h_{k_9},h_{k_{10}},h_{k_{11}},h_{k_6},h_{k_1}) \), where \( h_j \) and \( h_k \) are siblings, \( h_{k_1} \) and \( h_{k_2} \) are the children of \( h_k \), nodes \( h_{k_4} \) and \( h_{k_5} \) are the children of \( h_{k_2} \), nodes \( h_{k_6} \) and \( h_{k_8} \) are the children of \( h_{k_4} \), nodes \( h_{k_9} \) and \( h_{k_{11}} \) are the children of \( h_{k_8} \), and \( h_{k_{10}} \) is a child of \( h_{k_9} \);
\label{itm:hepta13}
\item \( (h_j,h_{k_5},h_{k_9},h_{k_{10}},h_{k_{12}},h_{k_{11}},h_{k_6},h_{k_1}) \), where \( h_j \) and \( h_k \) are siblings, \( h_{k_1} \) and \( h_{k_2} \) are the children of \( h_k \), nodes \( h_{k_4} \) and \( h_{k_5} \) are the children of \( h_{k_2} \), nodes \( h_{k_6} \) and \( h_{k_8} \) are the children of \( h_{k_4} \), nodes \( h_{k_9} \) and \( h_{k_{11}} \) are the children of \( h_{k_8} \), node \( h_{k_{10}} \) is a child of \( h_{k_9} \), and \( h_{k_{12}} \) is a child of \( h_{k_{10}} \);
\label{itm:octa6}
\item \( (h_j,h_{k_5},h_{k_9},h_{k_{11}},h_{k_{13}},h_{k_6},h_{k_1}) \), where \( h_j \) and \( h_k \) are siblings, \( h_{k_1} \) and \( h_{k_2} \) are the children of \( h_k \), nodes \( h_{k_4} \) and \( h_{k_5} \) are the children of \( h_{k_2} \), nodes \( h_{k_6} \) and \( h_{k_8} \) are the children of \( h_{k_4} \), nodes \( h_{k_9} \) and \( h_{k_{11}} \) are the children of \( h_{k_8} \), and \( h_{k_{13}} \) is a child of \( h_{k_{11}} \);
\label{itm:hepta14}
\item \( (h_j,h_{k_5},h_{k_9},h_{k_{10}},h_{k_{11}},h_{k_{13}},h_{k_6},h_{k_1}) \), where \( h_j \) and \( h_k \) are siblings, \( h_{k_1} \) and \( h_{k_2} \) are the children of \( h_k \), nodes \( h_{k_4} \) and \( h_{k_5} \) are the children of \( h_{k_2} \), nodes \( h_{k_6} \) and \( h_{k_8} \) are the children of \( h_{k_4} \), nodes \( h_{k_9} \) and \( h_{k_{11}} \) are the children of \( h_{k_8} \), node \( h_{k_{10}} \) is a child of \( h_{k_9} \), and \( h_{k_{13}} \) is a child of \( h_{k_{11}} \);
\label{itm:octa7}
\item \( (h_j,h_{k_5},h_{k_9},h_{k_{10}},h_{k_{12}},h_{k_{11}},h_{k_{13}},h_{k_6},h_{k_1}) \), where \( h_j \) and \( h_k \) are siblings, \( h_{k_1} \) and \( h_{k_2} \) are the children of \( h_k \), nodes \( h_{k_4} \) and \( h_{k_5} \) are the children of \( h_{k_2} \), nodes \( h_{k_6} \) and \( h_{k_8} \) are the children of \( h_{k_4} \), nodes \( h_{k_9} \) and \( h_{k_{11}} \) are the children of \( h_{k_8} \), node \( h_{k_{10}} \) is a child of \( h_{k_9} \), node \( h_{k_{12}} \) is a child of \( h_{k_{10}} \), and \( h_{k_{13}} \) is a child of \( h_{k_{11}} \);
\label{itm:nona2}
\item \( (h_j,h_{k_5},h_{k_9},h_{k_{11}},h_{k_6},h_{k_1},h_{k_3}) \), where \( h_j \) and \( h_k \) are siblings, \( h_{k_1} \) and \( h_{k_2} \) are the children of \( h_k \), nodes \( h_{k_4} \) and \( h_{k_5} \) are the children of \( h_{k_2} \), node \( h_{k_3} \) is a child of \( h_{k_1} \), nodes \( h_{k_6} \) and \( h_{k_8} \) are the children of \( h_{k_4} \), and \( h_{k_9} \) and \( h_{k_{11}} \) are the children of \( h_{k_8} \);
\label{itm:hepta15}
\item \( (h_j,h_{k_5},h_{k_9},h_{k_{10}},h_{k_{11}},h_{k_6},h_{k_1},h_{k_3}) \), where \( h_j \) and \( h_k \) are siblings, \( h_{k_1} \) and \( h_{k_2} \) are the children of \( h_k \), nodes \( h_{k_4} \) and \( h_{k_5} \) are the children of \( h_{k_2} \), node \( h_{k_3} \) is a child of \( h_{k_1} \), nodes \( h_{k_6} \) and \( h_{k_8} \) are the children of \( h_{k_4} \), nodes \( h_{k_9} \) and \( h_{k_{11}} \) are the children of \( h_{k_8} \), and \( h_{k_{10}} \) is a child of \( h_{k_9} \);
\label{itm:octa8}
\item \( (h_j,h_{k_5},h_{k_9},h_{k_{10}},h_{k_{12}},h_{k_{11}},h_{k_6},h_{k_1},h_{k_3}) \), where \( h_j \) and \( h_k \) are siblings, \( h_{k_1} \) and \( h_{k_2} \) are the children of \( h_k \), nodes \( h_{k_4} \) and \( h_{k_5} \) are the children of \( h_{k_2} \), node \( h_{k_3} \) is a child of \( h_{k_1} \), nodes \( h_{k_6} \) and \( h_{k_8} \) are the children of \( h_{k_4} \), nodes \( h_{k_9} \) and \( h_{k_{11}} \) are the children of \( h_{k_8} \), node \( h_{k_{10}} \) is a child of \( h_{k_9} \), and \( h_{k_{12}} \) is a child of \( h_{k_{10}} \);
\label{itm:nona3}
\item \( (h_j,h_{k_5},h_{k_9},h_{k_{11}},h_{k_{13}},h_{k_6},h_{k_1},h_{k_3}) \), where \( h_j \) and \( h_k \) are siblings, \( h_{k_1} \) and \( h_{k_2} \) are the children of \( h_k \), nodes \( h_{k_4} \) and \( h_{k_5} \) are the children of \( h_{k_2} \), node \( h_{k_3} \) is a child of \( h_{k_1} \), nodes \( h_{k_6} \) and \( h_{k_8} \) are the children of \( h_{k_4} \), nodes \( h_{k_9} \) and \( h_{k_{11}} \) are the children of \( h_{k_8} \), and \( h_{k_{13}} \) is a child of \( h_{k_{11}} \);
\label{itm:octa9}
\item \( (h_j,h_{k_5},h_{k_9},h_{k_{10}},h_{k_{11}},h_{k_{13}},h_{k_6},h_{k_1},h_{k_3}) \), where \( h_j \) and \( h_k \) are siblings, \( h_{k_1} \) and \( h_{k_2} \) are the children of \( h_k \), nodes \( h_{k_4} \) and \( h_{k_5} \) are the children of \( h_{k_2} \), node \( h_{k_3} \) is a child of \( h_{k_1} \), nodes \( h_{k_6} \) and \( h_{k_8} \) are the children of \( h_{k_4} \), nodes \( h_{k_9} \) and \( h_{k_{11}} \) are the children of \( h_{k_8} \), node \( h_{k_{10}} \) is a child of \( h_{k_9} \), and \( h_{k_{13}} \) is a child of \( h_{k_{11}} \);
\label{itm:nona4}
\item \( (h_j,h_{k_5},h_{k_9},h_{k_{10}},h_{k_{12}},h_{k_{11}},h_{k_{13}},h_{k_6},h_{k_1},h_{k_3}) \), where \( h_j \) and \( h_k \) are siblings, \( h_{k_1} \) and \( h_{k_2} \) are the children of \( h_k \), nodes \( h_{k_4} \) and \( h_{k_5} \) are the children of \( h_{k_2} \), node \( h_{k_3} \) is a child of \( h_{k_1} \), nodes \( h_{k_6} \) and \( h_{k_8} \) are the children of \( h_{k_4} \), nodes \( h_{k_9} \) and \( h_{k_{11}} \) are the children of \( h_{k_8} \), node \( h_{k_{10}} \) is a child of \( h_{k_9} \), node \( h_{k_{12}} \) is a child of \( h_{k_{10}} \), and \( h_{k_{13}} \) is a child of \( h_{k_{11}} \);
\label{itm:deca1}
\item \( (h_j,h_{k_5},h_{k_9},h_{k_{10}},h_{k_{13}},h_{k_6},h_{k_1}) \), where \( h_j \) and \( h_k \) are siblings, \( h_{k_1} \) and \( h_{k_2} \) are the children of \( h_k \), nodes \( h_{k_4} \) and \( h_{k_5} \) are the children of \( h_{k_2} \), nodes \( h_{k_6} \) and \( h_{k_8} \) are the children of \( h_{k_4} \), nodes \( h_{k_9} \) and \( h_{k_{11}} \) are the children of \( h_{k_8} \), node \( h_{k_{10}} \) is a child of \( h_{k_9} \), and \( h_{k_{13}} \) is a child of \( h_{k_{11}} \);
\label{itm:hepta16}
\item \( (h_j,h_{k_5},h_{k_9},h_{k_{13}},h_{k_6},h_{k_1},h_{k_3}) \), where \( h_j \) and \( h_k \) are siblings, \( h_{k_1} \) and \( h_{k_2} \) are the children of \( h_k \), nodes \( h_{k_4} \) and \( h_{k_5} \) are the children of \( h_{k_2} \), node \( h_{k_3} \) is a child of \( h_{k_1} \), nodes \( h_{k_6} \) and \( h_{k_8} \) are the children of \( h_{k_4} \), nodes \( h_{k_9} \) and \( h_{k_{11}} \) are the children of \( h_{k_8} \), and \( h_{k_{13}} \) is a child of \( h_{k_{11}} \);
\label{itm:hepta17}
\item \( (h_j,h_{k_5},h_{k_9},h_{k_{10}},h_{k_{13}},h_{k_6},h_{k_1},h_{k_3}) \), where \( h_j \) and \( h_k \) are siblings, \( h_{k_1} \) and \( h_{k_2} \) are the children of \( h_k \), nodes \( h_{k_4} \) and \( h_{k_5} \) are the children of \( h_{k_2} \), node \( h_{k_3} \) is a child of \( h_{k_1} \), nodes \( h_{k_6} \) and \( h_{k_8} \) are the children of \( h_{k_4} \), nodes \( h_{k_9} \) and \( h_{k_{11}} \) are the children of \( h_{k_8} \), node \( h_{k_{10}} \) is a child of \( h_{k_9} \), and \( h_{k_{13}} \) is a child of \( h_{k_{11}} \);
\label{itm:octa10}
\item \( (h_j,h_{k_5},h_{k_9},h_{k_{10}},h_{k_{14}},h_{k_{13}},h_{k_6},h_{k_1}) \), where \( h_j \) and \( h_k \) are siblings, \( h_{k_1} \) and \( h_{k_2} \) are the children of \( h_k \), nodes \( h_{k_4} \) and \( h_{k_5} \) are the children of \( h_{k_2} \), nodes \( h_{k_6} \) and \( h_{k_8} \) are the children of \( h_{k_4} \), nodes \( h_{k_9} \) and \( h_{k_{11}} \) are the children of \( h_{k_8} \), node \( h_{k_{10}} \) is a child of \( h_{k_9} \), and \( h_{k_{13}} \) and \( h_{k_{14}} \) are the children of \( h_{k_{11}} \);
\label{itm:octa11}
\item \( (h_j,h_{k_5},h_{k_9},h_{k_{14}},h_{k_{15}},h_{k_{13}},h_{k_6},h_{k_1}) \), where \( h_j \) and \( h_k \) are siblings, \( h_{k_1} \) and \( h_{k_2} \) are the children of \( h_k \), nodes \( h_{k_4} \) and \( h_{k_5} \) are the children of \( h_{k_2} \), nodes \( h_{k_6} \) and \( h_{k_8} \) are the children of \( h_{k_4} \), nodes \( h_{k_9} \) and \( h_{k_{11}} \) are the children of \( h_{k_8} \), nodes \( h_{k_{13}} \) and \( h_{k_{14}} \) are the children of \( h_{k_{11}} \), and \( h_{k_{15}} \) is a child of \( h_{k_{14}} \);
\label{itm:octa12}
\item \( (h_j,h_{k_5},h_{k_9},h_{k_{10}},h_{k_{14}},h_{k_{15}},h_{k_{13}},h_{k_6},h_{k_1}) \), where \( h_j \) and \( h_k \) are siblings, \( h_{k_1} \) and \( h_{k_2} \) are the children of \( h_k \), nodes \( h_{k_4} \) and \( h_{k_5} \) are the children of \( h_{k_2} \), nodes \( h_{k_6} \) and \( h_{k_8} \) are the children of \( h_{k_4} \), nodes \( h_{k_9} \) and \( h_{k_{11}} \) are the children of \( h_{k_8} \), node \( h_{k_{10}} \) is a child of \( h_{k_9} \), nodes \( h_{k_{13}} \) and \( h_{k_{14}} \) are the children of \( h_{k_{11}} \), and \( h_{k_{15}} \) is a child of \( h_{k_{14}} \);
\label{itm:nona5}
\item \( (h_j,h_{k_5},h_{k_9},h_{k_{14}},h_{k_{15}},h_{k_{16}},h_{k_{13}},h_{k_6},h_{k_1}) \), where \( h_j \) and \( h_k \) are siblings, \( h_{k_1} \) and \( h_{k_2} \) are the children of \( h_k \), nodes \( h_{k_4} \) and \( h_{k_5} \) are the children of \( h_{k_2} \), nodes \( h_{k_6} \) and \( h_{k_8} \) are the children of \( h_{k_4} \), nodes \( h_{k_9} \) and \( h_{k_{11}} \) are the children of \( h_{k_8} \), nodes \( h_{k_{13}} \) and \( h_{k_{14}} \) are the children of \( h_{k_{11}} \), node \( h_{k_{15}} \) is a child of \( h_{k_{14}} \), and \( h_{k_{16}} \) is a child of \( h_{k_{15}} \);
\label{itm:nona6}
\item \( (h_j,h_{k_5},h_{k_9},h_{k_{10}},h_{k_{14}},h_{k_{15}},h_{k_{16}},h_{k_{13}},h_{k_6},h_{k_1}) \), where \( h_j \) and \( h_k \) are siblings, \( h_{k_1} \) and \( h_{k_2} \) are the children of \( h_k \), nodes \( h_{k_4} \) and \( h_{k_5} \) are the children of \( h_{k_2} \), nodes \( h_{k_6} \) and \( h_{k_8} \) are the children of \( h_{k_4} \), nodes \( h_{k_9} \) and \( h_{k_{11}} \) are the children of \( h_{k_8} \), node \( h_{k_{10}} \) is a child of \( h_{k_9} \), nodes \( h_{k_{13}} \) and \( h_{k_{14}} \) are the children of \( h_{k_{11}} \), node \( h_{k_{15}} \) is a child of \( h_{k_{14}} \), and \( h_{k_{16}} \) is a child of \( h_{k_{15}} \);
\label{itm:deca2}
\item \( (h_j,h_{k_5},h_{k_9},h_{k_{10}},h_{k_{14}},h_{k_{13}},h_{k_6},h_{k_1},h_{k_3}) \), where \( h_j \) and \( h_k \) are siblings, \( h_{k_1} \) and \( h_{k_2} \) are the children of \( h_k \), nodes \( h_{k_4} \) and \( h_{k_5} \) are the children of \( h_{k_2} \), node \( h_{k_3} \) is a child of \( h_{k_1} \), nodes \( h_{k_6} \) and \( h_{k_8} \) are the children of \( h_{k_4} \), nodes \( h_{k_9} \) and \( h_{k_{11}} \) are the children of \( h_{k_8} \), node \( h_{k_{10}} \) is a child of \( h_{k_9} \), and \( h_{k_{13}} \) and \( h_{k_{14}} \) are the children of \( h_{k_{11}} \);
\label{itm:octa13}
\item \( (h_j,h_{k_5},h_{k_9},h_{k_{10}},h_{k_{14}},h_{k_{13}},h_{k_6},h_{k_1},h_{k_3}) \), where \( h_j \) and \( h_k \) are siblings, \( h_{k_1} \) and \( h_{k_2} \) are the children of \( h_k \), nodes \( h_{k_4} \) and \( h_{k_5} \) are the children of \( h_{k_2} \), node \( h_{k_3} \) is a child of \( h_{k_1} \), nodes \( h_{k_6} \) and \( h_{k_8} \) are the children of \( h_{k_4} \), nodes \( h_{k_9} \) and \( h_{k_{11}} \) are the children of \( h_{k_8} \), node \( h_{k_{10}} \) is a child of \( h_{k_9} \), and \( h_{k_{13}} \) and \( h_{k_{14}} \) are the children of \( h_{k_{11}} \);
\label{itm:nona7}
\item \( (h_j,h_{k_5},h_{k_9},h_{k_{14}},h_{k_{15}},h_{k_{13}},h_{k_6},h_{k_1},h_{k_3}) \), where \( h_j \) and \( h_k \) are siblings, \( h_{k_1} \) and \( h_{k_2} \) are the children of \( h_k \), nodes \( h_{k_4} \) and \( h_{k_5} \) are the children of \( h_{k_2} \), node \( h_{k_3} \) is a child of \( h_{k_1} \), nodes \( h_{k_6} \) and \( h_{k_8} \) are the children of \( h_{k_4} \), nodes \( h_{k_9} \) and \( h_{k_{11}} \) are the children of \( h_{k_8} \), nodes \( h_{k_{13}} \) and \( h_{k_{14}} \) are the children of \( h_{k_{11}} \), and \( h_{k_{15}} \) is a child of \( h_{k_{14}} \);
\label{itm:nona8}
\item \( (h_j,h_{k_5},h_{k_9},h_{k_{10}},h_{k_{14}},h_{k_{15}},h_{k_{13}},h_{k_6},h_{k_1},h_{k_3}) \), where \( h_j \) and \( h_k \) are siblings, \( h_{k_1} \) and \( h_{k_2} \) are the children of \( h_k \), nodes \( h_{k_4} \) and \( h_{k_5} \) are the children of \( h_{k_2} \), node \( h_{k_3} \) is a child of \( h_{k_1} \), nodes \( h_{k_6} \) and \( h_{k_8} \) are the children of \( h_{k_4} \), nodes \( h_{k_9} \) and \( h_{k_{11}} \) are the children of \( h_{k_8} \), node \( h_{k_{10}} \) is a child of \( h_{k_9} \), nodes \( h_{k_{13}} \) and \( h_{k_{14}} \) are the children of \( h_{k_{11}} \), and \( h_{k_{15}} \) is a child of \( h_{k_{14}} \);
\label{itm:deca3}
\item \( (h_j,h_{k_5},h_{k_9},h_{k_{14}},h_{k_{15}},h_{k_{16}},h_{k_{13}},h_{k_6},h_{k_1},h_{k_3}) \), where \( h_j \) and \( h_k \) are siblings, \( h_{k_1} \) and \( h_{k_2} \) are the children of \( h_k \), nodes \( h_{k_4} \) and \( h_{k_5} \) are the children of \( h_{k_2} \), node \( h_{k_3} \) is a child of \( h_{k_1} \), nodes \( h_{k_6} \) and \( h_{k_8} \) are the children of \( h_{k_4} \), nodes \( h_{k_9} \) and \( h_{k_{11}} \) are the children of \( h_{k_8} \), nodes \( h_{k_{13}} \) and \( h_{k_{14}} \) are the children of \( h_{k_{11}} \), node \( h_{k_{15}} \) is a child of \( h_{k_{14}} \), and \( h_{k_{16}} \) is a child of \( h_{k_{15}} \);
\label{itm:deca4}
\item \( (h_j,h_{k_5},h_{k_9},h_{k_{10}},h_{k_{14}},h_{k_{15}},h_{k_{16}},h_{k_{13}},h_{k_6},h_{k_1},h_{k_3}) \), where \( h_j \) and \( h_k \) are siblings, \( h_{k_1} \) and \( h_{k_2} \) are the children of \( h_k \), nodes \( h_{k_4} \) and \( h_{k_5} \) are the children of \( h_{k_2} \), node \( h_{k_3} \) is a child of \( h_{k_1} \), nodes \( h_{k_6} \) and \( h_{k_8} \) are the children of \( h_{k_4} \), nodes \( h_{k_9} \) and \( h_{k_{11}} \) are the children of \( h_{k_8} \), node \( h_{k_{10}} \) is a child of \( h_{k_9} \), nodes \( h_{k_{13}} \) and \( h_{k_{14}} \) are the children of \( h_{k_{11}} \), node \( h_{k_{15}} \) is a child of \( h_{k_{14}} \), and \( h_{k_{16}} \) is a child of \( h_{k_{15}} \);
\label{itm:undeca1}
\item \( (h_j,h_{k_4},h_{k_6},h_{k_3}) \), where \( h_j \) and \( h_k \) are siblings, \( h_{k_1} \) is a child of \( h_k \), nodes \( h_{k_3} \) and \( h_{k_4} \) are the children of \( h_{k_1} \), and \( h_{k_6} \) is a child of \( h_{k_4} \);
\label{itm:penta14}
\item \( (h_j,h_{k_4},h_{k_6},h_{k_8},h_{k_3}) \), where \( h_j \) and \( h_k \) are siblings, \( h_{k_1} \) is a child of \( h_k \), nodes \( h_{k_3} \) and \( h_{k_4} \) are the children of \( h_{k_1} \), node \( h_{k_6} \) is a child of \( h_{k_4} \), and \( h_{k_8} \) is a child of \( h_{k_6} \);
\label{itm:hexa19}
\item \( (h_j,h_{j_1},h_{k_2},h_{k_3}) \), where \( h_j \) and \( h_k \) are siblings, \( h_{j_1} \) is a child of \( h_j \), nodes \( h_{k_1} \) and \( h_{k_2} \) are the children of \( h_k \), and \( h_{k_3} \) is a child of \( h_{k_1} \);
\label{itm:tetra10}
\item \( (h_j,h_{k_2},h_{k_4},h_{k_3}) \), where \( h_j \) and \( h_k \) are siblings, \( h_{k_1} \) and \( h_{k_2} \) are the children of \( h_k \), node \( h_{k_3} \) is a child of \( h_{k_1} \), and \( h_{k_4} \) is a child of \( h_{k_2} \);
\label{itm:tetra11}
\item \( (h_j,h_{j_1},h_{k_2},h_{k_4},h_{k_3}) \), where \( h_j \) and \( h_k \) are siblings, \( h_{j_1} \) is a child of \( h_j \), nodes \( h_{k_1} \) and \( h_{k_2} \) are the children of \( h_k \), node \( h_{k_3} \) is a child of \( h_{k_1} \), and \( h_{k_4} \) is a child of \( h_{k_2} \);
\label{itm:penta15}
\item \( (h_j,h_{k_2},h_{k_4},h_{k_6},h_{k_3}) \), where \( h_j \) and \( h_k \) are siblings, \( h_{k_1} \) and \( h_{k_2} \) are the children of \( h_k \), nodes \( h_{k_3} \) and \( h_{k_4} \) are the children of \( h_{k_1} \), and \( h_{k_6} \) is a child of \( h_{k_4} \);
\label{itm:penta16}
\item \( (h_j,h_{k_2},h_{k_4},h_{k_6},h_{k_8},h_{k_3}) \), where \( h_j \) and \( h_k \) are siblings, \( h_{k_1} \) and \( h_{k_2} \) are the children of \( h_k \), nodes \( h_{k_3} \) and \( h_{k_4} \) are the children of \( h_{k_1} \), node \( h_{k_6} \) is a child of \( h_{k_4} \), and \( h_{k_8} \) is a child of \( h_{k_6} \);
\label{itm:hexa20}
\item \( (h_j,h_{k_2},h_{k_5},h_{k_4},h_{k_3}) \), where \( h_j \) and \( h_k \) are siblings, \( h_{k_1} \) and \( h_{k_2} \) are the children of \( h_k \), nodes \( h_{k_3} \) and \( h_{k_4} \) are the children of \( h_{k_1} \), and \( h_{k_5} \) is a child of \( h_{k_2} \);
\label{itm:penta17}
\item \( (h_j,h_{k_2},h_{k_5},h_{k_4},h_{k_6},h_{k_3}) \), where \( h_j \) and \( h_k \) are siblings, \( h_{k_1} \) and \( h_{k_2} \) are the children of \( h_k \), nodes \( h_{k_3} \) and \( h_{k_4} \) are the children of \( h_{k_1} \), node \( h_{k_5} \) is a child of \( h_{k_2} \), and \( h_{k_6} \) is a child of \( h_{k_4} \);
\label{itm:hexa21}
\item \( (h_j,h_{k_2},h_{k_5},h_{k_4},h_{k_6},h_{k_8},h_{k_3}) \), where \( h_j \) and \( h_k \) are siblings, \( h_{k_1} \) and \( h_{k_2} \) are the children of \( h_k \), nodes \( h_{k_3} \) and \( h_{k_4} \) are the children of \( h_{k_1} \), node \( h_{k_5} \) is a child of \( h_{k_2} \), node \( h_{k_6} \) is a child of \( h_{k_4} \), and \( h_{k_8} \) is a child of \( h_{k_6} \);
\label{itm:hepta18}
\item \( (h_j,h_{j_1},h_{k_2},h_{k_4},h_{k_3}) \), where \( h_j \) and \( h_k \) are siblings, node \( h_{j_1} \) is a child of \( h_j \), nodes \( h_{k_1} \) and \( h_{k_2} \) are the children of \( h_k \), and \( h_{k_3} \) and \( h_{k_4} \) are the children of \( h_{k_1} \);
\label{itm:penta18}
\item \( (h_j,h_{j_1},h_{k_2},h_{k_4},h_{k_6},h_{k_3}) \), where \( h_j \) and \( h_k \) are siblings, \( h_{j_1} \) is a child of \( h_j \), nodes \( h_{k_1} \) and \( h_{k_2} \) are the children of \( h_k \), nodes \( h_{k_3} \) and \( h_{k_4} \) are the children of \( h_{k_1} \), and \( h_{k_6} \) is a child of \( h_{k_4} \);
\label{itm:hexa22}
\item \( (h_j,h_{j_1},h_{k_2},h_{k_4},h_{k_6},h_{k_8},h_{k_3}) \), where \( h_j \) and \( h_k \) are siblings, \( h_{j_1} \) is a child of \( h_j \), nodes \( h_{k_1} \) and \( h_{k_2} \) are the children of \( h_k \), nodes \( h_{k_3} \) and \( h_{k_4} \) are the children of \( h_{k_1} \), node \( h_{k_6} \) is a child of \( h_{k_4} \), and \( h_{k_8} \) is a child of \( h_{k_6} \);
\label{itm:hepta19}
\item \( (h_j,h_{j_1},h_{k_2},h_{k_5},h_{k_4},h_{k_3}) \), where \( h_j \) and \( h_k \) are siblings, \( h_{j_1} \) is a child of \( h_j \), nodes \( h_{k_1} \) and \( h_{k_2} \) are the children of \( h_k \), nodes \( h_{k_3} \) and \( h_{k_4} \) are the children of \( h_{k_1} \), and \( h_{k_5} \) is a child of \( h_{k_2} \);
\label{itm:hexa23}
\item \( (h_j,h_{j_1},h_{k_2},h_{k_5},h_{k_4},h_{k_6},h_{k_3}) \), where \( h_j \) and \( h_k \) are siblings, \( h_{j_1} \) is a child of \( h_j \), nodes \( h_{k_1} \) and \( h_{k_2} \) are the children of \( h_k \), nodes \( h_{k_3} \) and \( h_{k_4} \) are the children of \( h_{k_1} \), node \( h_{k_5} \) is a child of \( h_{k_2} \), and \( h_{k_6} \) is a child of \( h_{k_4} \);
\label{itm:hepta20}
\item \( (h_j,h_{j_1},h_{k_2},h_{k_5},h_{k_4},h_{k_6},h_{k_8},h_{k_3}) \), where \( h_j \) and \( h_k \) are siblings, \( h_{j_1} \) is a child of \( h_j \), nodes \( h_{k_1} \) and \( h_{k_2} \) are the children of \( h_k \), nodes \( h_{k_3} \) and \( h_{k_4} \) are the children of \( h_{k_1} \), node \( h_{k_5} \) is a child of \( h_{k_2} \), node \( h_{k_6} \) is a child of \( h_{k_4} \), and \( h_{k_8} \) is a child of \( h_{k_6} \);
\label{itm:octa14}
\item \( (h_j,h_{k_4},h_{k_6},h_{k_5},h_{k_3}) \), where \( h_j \) and \( h_k \) are siblings, \( h_{k_1} \) and \( h_{k_2} \) are the children of \( h_k \), node \( h_{k_3} \) is a child of \( h_{k_1} \), nodes \( h_{k_4} \) and \( h_{k_5} \) are the children of \( h_{k_2} \), and \( h_{k_6} \) is a child of \( h_{k_4} \);
\label{itm:penta19}
\item \( (h_j,h_{k_4},h_{k_6},h_{k_8},h_{k_5},h_{k_3}) \), where \( h_j \) and \( h_k \) are siblings, \( h_{k_1} \) and \( h_{k_2} \) are the children of \( h_k \), node \( h_{k_3} \) is a child of \( h_{k_1} \), nodes \( h_{k_4} \) and \( h_{k_5} \) are the children of \( h_{k_2} \), node \( h_{k_6} \) is a child of \( h_{k_4} \), and \( h_{k_8} \) is a child of \( h_{k_6} \);
\label{itm:hexa24}
\item \( (h_j,h_{k_5},h_{k_9},h_{k_6},h_{k_4},h_{k_3}) \), where \( h_j \) and \( h_k \) are siblings, \( h_{k_1} \) and \( h_{k_2} \) are the children of \( h_k \), nodes \( h_{k_3} \) and \( h_{k_4} \) are the children of \( h_{k_1} \), nodes \( h_{k_5} \) and \( h_{k_6} \) are the children of \( h_{k_2} \), and \( h_{k_9} \) is a child of \( h_{k_5} \);
\label{itm:hexa25}
\item \( (h_j,h_{k_5},h_{k_9},h_{k_{10}},h_{k_6},h_{k_4},h_{k_3}) \), where \( h_j \) and \( h_k \) are siblings, \( h_{k_1} \) and \( h_{k_2} \) are the children of \( h_k \), nodes \( h_{k_3} \) and \( h_{k_4} \) are the children of \( h_{k_1} \), nodes \( h_{k_5} \) and \( h_{k_6} \) are the children of \( h_{k_2} \), node \( h_{k_9} \) is a child of \( h_{k_5} \), and \( h_{k_{10}} \) is a child of \( h_{k_9} \);
\label{itm:hepta21}
\item \( (h_j,h_{k_5},h_{k_6},h_{k_4},h_{k_7},h_{k_3}) \), where \( h_j \) and \( h_k \) are siblings, \( h_{k_1} \) and \( h_{k_2} \) are the children of \( h_k \), nodes \( h_{k_3} \) and \( h_{k_4} \) are the children of \( h_{k_1} \), nodes \( h_{k_5} \) and \( h_{k_6} \) are the children of \( h_{k_2} \), and \( h_{k_7} \) is a child of \( h_{k_4} \);
\label{itm:hexa26}
\item \( (h_j,h_{k_5},h_{k_9},h_{k_6},h_{k_4},h_{k_7},h_{k_3}) \), where \( h_j \) and \( h_k \) are siblings, \( h_{k_1} \) and \( h_{k_2} \) are the children of \( h_k \), nodes \( h_{k_3} \) and \( h_{k_4} \) are the children of \( h_{k_1} \), nodes \( h_{k_5} \) and \( h_{k_6} \) are the children of \( h_{k_2} \), node \( h_{k_7} \) is a child of \( h_{k_4} \), and \( h_{k_9} \) is a child of \( h_{k_5} \);
\label{itm:hepta22}
\item \( (h_j,h_{k_5},h_{k_9},h_{k_{10}},h_{k_6},h_{k_4},h_{k_7},h_{k_3}) \), where \( h_j \) and \( h_k \) are siblings, \( h_{k_1} \) and \( h_{k_2} \) are the children of \( h_k \), nodes \( h_{k_3} \) and \( h_{k_4} \) are the children of \( h_{k_1} \), nodes \( h_{k_5} \) and \( h_{k_6} \) are the children of \( h_{k_2} \), node \( h_{k_7} \) is a child of \( h_{k_4} \), node \( h_{k_9} \) is a child of \( h_{k_5} \), and \( h_{k_{10}} \) is a child of \( h_{k_9} \);
\label{itm:octa15}
\item \( (h_j,h_{k_5},h_{k_6},h_{k_4},h_{k_7},h_{k_8},h_{k_3}) \), where \( h_j \) and \( h_k \) are siblings, \( h_{k_1} \) and \( h_{k_2} \) are the children of \( h_k \), nodes \( h_{k_3} \) and \( h_{k_4} \) are the children of \( h_{k_1} \), nodes \( h_{k_5} \) and \( h_{k_6} \) are the children of \( h_{k_2} \), node \( h_{k_7} \) is a child of \( h_{k_4} \), and \( h_{k_8} \) is a child of \( h_{k_7} \);
\label{itm:hepta23}
\item \( (h_j,h_{k_5},h_{k_9},h_{k_6},h_{k_4},h_{k_7},h_{k_8},h_{k_3}) \), where \( h_j \) and \( h_k \) are siblings, \( h_{k_1} \) and \( h_{k_2} \) are the children of \( h_k \), nodes \( h_{k_3} \) and \( h_{k_4} \) are the children of \( h_{k_1} \), nodes \( h_{k_5} \) and \( h_{k_6} \) are the children of \( h_{k_2} \), node \( h_{k_7} \) is a child of \( h_{k_4} \), node \( h_{k_9} \) is a child of \( h_{k_5} \), and \( h_{k_8} \) is a child of \( h_{k_7} \);
\label{itm:octa16}
\item \( (h_j,h_{k_5},h_{k_9},h_{k_{10}},h_{k_6},h_{k_4},h_{k_7},h_{k_8},h_{k_3}) \), where \( h_j \) and \( h_k \) are siblings, \( h_{k_1} \) and \( h_{k_2} \) are the children of \( h_k \), nodes \( h_{k_3} \) and \( h_{k_4} \) are the children of \( h_{k_1} \), nodes \( h_{k_5} \) and \( h_{k_6} \) are the children of \( h_{k_2} \), node \( h_{k_7} \) is a child of \( h_{k_4} \), node \( h_{k_9} \) is a child of \( h_{k_5} \), node \( h_{k_8} \) is a child of \( h_{k_7} \), and \( h_{k_{10}} \) is a child of \( h_{k_9} \);
\label{itm:nona9}
\end{enumerate}

\end{document}